\newcommand{\Rmnum}[1]{\expandafter\@slowromancap\romannumeral #1@}
\newtheorem{theorem}{Theorem}
\newtheorem{lemma}{Lemma}
\newtheorem{proposition}{Proposition}
\newtheorem{corollary}{Corollary}
\long\def\comment#1{}
\newfont{\bbb}{msbm10 scaled 700}
\newfont{\bb}{msbm10 scaled 1100}
\newcommand{\RR}{\mbox{\bb R}}
\newcommand{\PP}{\mbox{\bb P}}
\newcommand{\ZZ}{\mbox{\bb Z}}
\newcommand{\EE}{\mbox{\bb E}}
\newcommand{\xv}{{\bf x}}
\newcommand{\yv}{{\bf y}}
\newcommand{\zerov}{{\bf 0}}
\newcommand{\Am}{{\bf A}}
\newcommand{\Hm}{{\bf H}}
\newcommand{\Id}{{\bf I}}
\newcommand{\Um}{{\bf U}}
\newcommand{\Tt}{\text{T}}
\newcommand{\Hc}{{\cal H}}
\newcommand{\Nc}{{\cal N}}
\newcommand{\Sc}{{\cal S}}
\newcommand{\RNum}[1]{\uppercase\expandafter{\romannumeral #1\relax}}
\newcommand{\Lambdam}{\hbox{\boldmath$\Lambda$}}
\newcommand{\Deltam}{\hbox{\boldmath$\Delta$}}
\newcommand{\Sigmam}{\hbox{\boldmath$\Sigma$}}
\newcommand{\trace}{{\hbox{tr}}}
\newcommand{\eqdef}{\stackrel{\Delta}{=}}
\def\LRT#1#2{\!
\raisebox{.2ex}{$
{{\scriptstyle\;#1}\atop{\displaystyle\gtrless}}
\atop
{\raisebox{-1.25ex}{$\scriptstyle\;#2$}}
$}
\!}
\newcommand{\squeezedequation}{\medmuskip=2mu \thinmuskip=1mu \thickmuskip=3mu}
\begin{document}
%
\title{Stealth Attacks on the Smart Grid}
%
%
%


\author{Ke~Sun,~\IEEEmembership{Student~Member,~IEEE,}
        I\~naki Esnaola,~\IEEEmembership{Member,~IEEE,}
        Samir M. Perlaza,~\IEEEmembership{Senior Member,~IEEE,} \\
       and H. Vincent Poor,~\IEEEmembership{Fellow,~IEEE}
%
\thanks{

K. Sun and I. Esnaola are with the Department of Automatic Control and Systems
Engineering, University of Sheffield, Sheffield S1 3JD, UK.
I. Esnaola is also with the Department of Electrical Engineering, Princeton University, Princeton NJ
08544, USA. (email: ke.sun@sheffield.ac.uk, esnaola@sheffield.ac.uk).

S.~M. Perlaza is with the Institut National de Recherche en Informatique
et Automatique (INRIA), Lyon, France, and also with the Department of
Electrical Engineering, Princeton University, Princeton, NJ 08544 USA (email:
samir.perlaza@inria.fr).

H.~V. Poor is with the Department of Electrical Engineering, Princeton
University, Princeton, NJ 08544 USA (e-mail: poor@princeton.edu).}
}

%



\maketitle

\begin{abstract}

Random attacks that jointly minimize the amount of information acquired by the operator about the state of the grid and the probability of attack detection are presented. The attacks minimize the information acquired by the operator by minimizing the mutual information between the observations and the state variables describing the grid. Simultaneously, the attacker aims to minimize the probability of attack detection by minimizing the Kullback-Leibler (KL) divergence between the distribution when the attack is present and the distribution under normal operation. The resulting cost function is the weighted sum of the mutual information and the KL divergence mentioned above. The trade-off between the probability of attack detection and the reduction of mutual information is governed by the weighting parameter on the KL divergence term in the cost function. The probability of attack detection is evaluated as a function of the weighting parameter. A sufficient condition on the weighting parameter is given for achieving an arbitrarily small probability of attack detection. The attack performance is numerically assessed on the IEEE 14-Bus, 30-Bus, and 118-Bus test systems.

\end{abstract}

\begin{IEEEkeywords}
Stealth, data injection attacks, information-theoretic security, mutual information, probability of detection
\end{IEEEkeywords}

%
\IEEEpeerreviewmaketitle

\section{Introduction}
%
%
%
%

\IEEEPARstart{T}{he} smart grid relies on the effective integration of the power grid and advanced communication and sensing infrastructure.
Consistency between the physical layer of the power grid and the energy management system (EMS) in the cyber layer facilitates an economic and reliable operation of the power system.
The 2003 North American outage caused by an alarm system failure \cite{_final_2004} and the 2015 Ukraine power failure caused by the BlackEnergy virus incident \cite{alderson_operational_2016} emphasize the need for cybersecurity mechanisms for the power system.
However, the cybersecurity threats to which the smart grid is exposed are not well understood yet,
and therefore, practical security solutions need to come forth as a multidisciplinary effort combining technologies such as cryptography, machine learning, and information-theoretic security \cite{RS_progress_2016}.

Data injection attacks (DIAs) have emerged as a major source of concern and exemplify the type of cybersecurity threats that specifically target power systems \cite{liu_false_2009}.
DIAs manipulate the state estimation process in the EMS by altering the measurements of the state variables without triggering the bad data detection mechanism put in place by the operator.
In \cite{liu_false_2009}, it is shown that attacks that lie in the column space of the Jacobian measurement matrix are undetectable by testing the residual.
To decrease the number of sensors that need to be compromised by the attacker while remaining undetectable,
the $\ell_0$ norm of the attack vector is used as minimization objective yielding sparse attack in \cite{dan_stealth_2010,sandberg_security_2010,sou_exact_2013} and \cite{kim_strategic_2011}.
%
The case in which sparse attacks are constructed in a distributed setting with multiple attackers is discussed in \cite{tajer_distributed_2011} and \cite{ ozay_sparse_2013}.
Estimation of the operating point is studied in \cite{deng_false_2018} using the power flow or power injection information in an inference problem formulation.
%
Attack detection methods that incorporate the statistical structure of the state variables are presented for the centralized case in \cite{hao_likelihood_2018} and for the decentralized case in \cite{moslemi_fast_2018}.
Similarly, other instances of prior knowledge use include using load forecasts in \cite{ashok_online_2018}, \cite{sridhar_model-based_2014}, and overload mitigation based on corrective dispatch in \cite{che_mitigating_2018}.

The complex nature of the power system leads naturally to a stochastic modelling of the state variables describing the grid.
For instance, the state variables of low voltage distribution systems are well described as following a multivariate Gaussian distribution \cite{genes_recovering_2016}.
DIAs within a Bayesian framework with minimum mean square error estimation  are studied in \cite{kosut_malicious_2011} for the centralized case and in \cite{esnaola_maximum_2016} for the distributed case. However, the fundamental limits governing the performance of attacks in the smart grid are not well understood yet.

%
%
%

Information-theoretic tools are well suited to analyze power system by leveraging the stochastic description of the state variables.
A sensor placement strategy that accounts for the amount of information acquired by the sensing infrastructure is studied in \cite{li_information-theoretic_2013}.
Information-theoretic privacy guarantees for smart meter users are proposed in \cite{varodayan_smart_2011, sankar_smart_2013, tan_increasing_2013} for memoryless stochastic processes and in \cite{arrieta_smart_2017} for general random processes.
In \cite{Sun_information-theoretic_2017}, {stealth} Gaussian DIA constructions are studied in terms of information measures that quantify the information loss and the probability of attack detection induced by the attack.
Therein, the proposed cost function gives the same weight to the information loss and the probability of detection which results in the effective secrecy framework proposed by \cite{hou_effective_2014} in the context of {\it stealth} communications. Stealth DIA constructions are also studied in \cite{dan_stealth_2010,vukovic_network-layer_2011} for the case in which the detection is based on the residual
and in a Bayesian hypothesis testing framework in \cite{huang_bad_2013}.
The approaches in \cite{dan_stealth_2010} and \cite{vukovic_network-layer_2011} consider the minimum cost of compromising the meters and the communication substation, respectively.
On the other hand, \cite{huang_bad_2013} focuses on the delay between the time of attacker launching the attack and the time of operator detecting the attack.

In this paper, the stealth attacks in \cite{Sun_information-theoretic_2017} are generalized by introducing a weight parameter to the objective describing the probability of detection, which allows the attacker to construct attacks with arbitrarily low probability of detection.
Operating under the assumption that the state variables are described by a multivariate Gaussian distribution \cite{kosut_malicious_2011, esnaola_maximum_2016}, we characterize the optimal Gaussian generalized stealth attacks. Since the performance of the attacks depends on the weighting parameter governing the probability of detection, we provide a sufficient condition on the weighting parameter that achieves a desired probability of attack detection.
{To this end, we characterize the probability of attack detection via an upper bound which leverages a concentration inequality in \cite{laurent_adaptive_2000}.}

The rest of the paper is organized as follows: In Section \ref{System_Model}, a Bayesian framework with linearized dynamics for DIA is introduced.
The generalized stealth attack construction and performance analysis are presented in Section \ref{Information-Theoretic_Attack}.
Section \ref{SEC:PDCI} provides the probability of detection of the generalized stealth attack, and the concentration inequality is used to derive the upper bound for probability of detection.
Section \ref{Numerical_Simulation} evaluates the proposed attack performance on IEEE test systems.
%
%
The paper ends with conclusions in Section \ref{Conclusion}.

\section{System Model} \label{System_Model}

\subsection{Bayesian Framework with Linearized Dynamics}

The measurement model for state estimation with linearized dynamics is given by
\begin{IEEEeqnarray}{c}\label{Equ:DCSE}
Y^{m} = \Hm X^{n} + Z^{m},
\end{IEEEeqnarray}
where $Y^{m} \in \RR^{m}$ is a vector of random variables describing the measurements; $X^{n} \in \RR^{n}$ is a vector of random variables describing the state variables;
$\Hm \in \RR^{m\times n}$ is the linearized Jacobian measurement matrix which is determined by the power network topology and the admittances of the branches;
and $Z^{m} \in \RR^{m} $ is the additive white Gaussian noise (AWGN) with distribution $\Nc(\zerov,\sigma^{2} \Id_{m})$ that is introduced by the sensors as a result of the thermal noise, c.f. \cite{abur_power_2004} and  \cite{grainger_power_1994}.
In the remaining of the paper, we assume that the vector of the state variables follows a multivariate Gaussian distribution given by
\begin{IEEEeqnarray}{c}
  X^{n} \sim \Nc(\zerov,\Sigmam_{X\!X}),
\end{IEEEeqnarray}
where $\Sigmam_{X\!X}\in \Sc^{n}_{+}$ is the covariance matrix of the distribution of the state variables and $\Sc^{n}_{+}$ denotes the set of positive semidefinite matrices of size $n\times n$.
As a result of the linearized dynamic in (\ref{Equ:DCSE}), the vector of measurements also follows a multivariate Gaussian distribution denoted by
\begin{IEEEeqnarray}{c}
  Y^{m} \sim \Nc(\zerov,\Sigmam_{Y\!Y}),
\end{IEEEeqnarray}
where $\Sigmam_{Y\!Y} = \Hm\Sigmam_{X\!X}\Hm^{\Tt} + \sigma^{2}\Id_{m} $ is the covariance matrix of the distribution of the vector of measurements.

Data injection attacks corrupt the measurements available to the operator by adding an attack vector to the measurements.
The resulting vector of compromised measurements is given by
\begin{IEEEeqnarray}{c}
\label{eq:measurement_model}
Y^{m}_{A} = \Hm X^{n} + Z^{m} + A^{m},
\end{IEEEeqnarray}
where $A^{m} \in \RR^{m}$ is the attack vector and $Y^{m}_{A} \in \RR^{m } $ is the vector containing the compromised measurements \cite{liu_false_2009}.
Given the stochastic nature of the state variables, it is reasonable for the attacker to pursue a stochastic attack construction strategy.
In the following, an attack vector which is independent of the state variables is constructed under the assumption that the attack vector follows a multivariate Gaussian distribution denoted by
\begin{IEEEeqnarray}{c}
  A^{m} \sim  \Nc (\zerov,\Sigmam_{A\!A}),
\end{IEEEeqnarray}
where $\Sigmam_{A\!A}\in \Sc^{m}_{+}$ is the covariance matrix of the attack distribution.
{The rationale for choosing a Gaussian distribution for the attack vector follows from the fact that for the measurement model in (\ref{eq:measurement_model}) the additive attack distribution that minimizes the mutual information between the vector of state variables and the compromised measurements is Gaussian \cite{SA_TIT_13}.}
Because of the Gaussianity of the attack distribution, the vector of compromised measurements is distributed as
\begin{IEEEeqnarray}{c}\label{Equ:Dis_Ya}
  Y_{A}^{m} \sim \Nc(\zerov,\Sigmam_{Y_{A}\!Y_{A}}),
\end{IEEEeqnarray}
where $\Sigmam_{Y_{A}\!Y_{A}} = \Hm\Sigmam_{X\!X}\Hm^{\Tt} + \sigma^{2}\Id_{m} + \Sigmam_{A\!A} $ is the covariance matrix of the distribution of the compromised measurements.

It is worth noting that the independence of the attack vector with respect to the state variables implies that constructing the attack vector does not require access to the realizations of the state variables.
In fact, knowledge of the second order moments of the state variables and the variance of the AWGN introduced by the measurement process suffices to construct the attack. This assumption significantly reduces the difficulty of the attack construction.

The operator of the power system makes use of the acquired measurements to detect the attack.
The detection problem is cast as a hypothesis testing problem with hypotheses
\begin{IEEEeqnarray}{cl}
\Hc_{0}:  \ & Y^{m} \sim \Nc(\zerov,\Sigmam_{Y\!Y}), \quad \text{versus}  \\
\Hc_{1}:  \ & Y^{m} \sim \Nc(\zerov,\Sigmam_{Y_{A}\!Y_{A}}).
\end{IEEEeqnarray}
The null hypothesis $\Hc_{0}$ describes the case in which the power system is not compromised, while the alternative hypothesis $\Hc_{1}$ describes the case in which the power system is under attack.

Two types of error are considered in hypothesis testing problems,
Type \RNum{1} error
is the probability of accepting $\Hc_{1}$ when $\Hc_{0}$ is the ground truth, i.e. a false alarm or false positive;
and Type \RNum{2} error
is the probability of accepting $\Hc_{0}$ when $\Hc_{1}$ is the ground truth, i.e. a true negative.
The Neyman-Pearson lemma \cite[Proposition \RNum{2}.D.1]{poor_introduction_1994} states that for a fixed probability of Type \RNum{1} error,
the likelihood ratio test (LRT) achieves the minimum Type \RNum{2} error when compared with any other test with an equal or smaller Type \RNum{1} error.
Consequently, the LRT is chosen to decide between $\Hc_{0}$ and $\Hc_{1}$ based on the available measurements.
The LRT between $\mathcal{H}_{0}$ and $\mathcal{H}_{1}$ takes following form:
\begin{equation}\label{LHRT}
L(\yv) \eqdef \frac{f_{Y^{m}_{A}}(\yv)}{f_{Y^{m}}(\yv)} \ \LRT{\Hc_{1}}{\Hc_{0}} \ \tau,
\end{equation}
where $\yv \in \RR^{m}$ is a realization of the vector of random variables modelling the measurements; $f_{Y_A^m}$ and  $f_{Y^m}$ denote the probability density functions (p.d.f.'s) of $Y_A^m$ and  $Y^m$, respectively; and $\tau$ is the decision threshold set by the operator to meet a given false alarm constraint.


\subsection{Information-Theoretic Setting}\label{Subsec:Information_Theoretic_setting}
The mutual information between two random variables is a measure of the amount of information that each random variable contains about the other random variable.
Consequently, the amount of information that the vector of measurements contains about the vector of state variables is determined by the mutual information between the vector of state variables and the vector of measurements.
Information measures have previously been used to quantify the amount of information acquired by different monitoring systems in a smart grid context.
For instance, in \cite{li_information-theoretic_2013} mutual information is used to quantify the amount of information obtained by phasor measurement units from the grid.
Similarly, mutual information is used to quantify the amount of information leaked by smart meters in \cite{varodayan_smart_2011} and \cite{sankar_smart_2013}.

The Kullback-Leibler (KL) divergence between two probability distributions is a measure of the statistical difference between the distributions.
As such, it is a practical measure to quantify the deviation of the measurement statistics with respect to the statistics under normal operating conditions. For instance, in \cite{moslemi_fast_2018} it is used to test abnormal behaviors on the grid.
For the hypothesis testing problem in (\ref{LHRT}), a small value of the KL divergence between $P_{Y^{m}_{A}}$ and $P_{Y^{m}}$ implies that on average the attack is unlikely to be detected by the LRT set by the attacker for a fixed value of $\tau$.

The purpose of the attacker is to disrupt the normal state estimation procedure by
minimizing the information that the operator acquires about the state variables, while guaranteeing that the probability of attack detection is small enough, and therefore, remain concealed in the system.

An information-theoretic framework for the attack construction is adopted in this paper.
To minimize the information that the operator acquires about the state variables from the measurements,
the attacker minimizes the mutual information between the vector of state variables and the vector of compromised measurements.
Specifically, the attacker aims to minimize $I(X^{n};Y_{A}^{m})$.
The rationale for choosing mutual information to measure the evidence acquired by the measurements stems from the fundamental character of information-theoretic measures.
In particular, mutual information describes the amount of information two random variables share, and therefore, it establishes in quantitative terms how much evidence the measurements contain. For that reason, it is natural for the attacker to attempt to minimize the mutual information with the aim of disrupting the monitoring process of the network operator.

On the other hand,
the probability of attack detection is determined by the detection threshold $\tau$ set by the operator and the distribution induced by the attack on the vector of compromised measurements.
%
%
An analytical expression of the probability of attack detection can be described in closed-form as a function of the distributions describing the measurements under both hypotheses. However, the expression is involved in general and it is not straightforward to incorporate it into an analytical formulation of the attack construction. For that reason, we instead consider the asymptotic performance of the LRT to evaluate the detection performance of the operator.
The Chernoff-Stein lemma \cite[Theorem 11.7.3]{cover_elements_2012} characterizes
the asymptotic exponent of the probability of detection when the number of observations of measurement vectors grows to infinity.
%
In our setting, the Chernoff-Stein lemma states that for any LRT and $\epsilon  \in (0,1/2)$, it holds that
\begin{align}
\label{eq:Chernoff-Stein}
\lim_{k \to \infty} \frac{1}{k} \log \beta_{k}^{\epsilon} = -D(P_{Y^{m}_{A}}\|P_{Y^{m}}),
\end{align}
where $D(\cdot \|\cdot)$ is the KL divergence, $\beta_{k}^{\epsilon}$ is the minimum Type II error such that the Type I error $\alpha$ satisfies $\alpha < \epsilon$, and $k$ is the number of $m$-dimensional measurement vectors that are available for the LRT.
Therefore, for the attacker, minimizing the asymptotic detection probability is equivalent to minimizing $D(P_{Y^{m}_{A}}\|P_{Y^{m}})$, where $P_{Y_A^m}$ and  $P_{Y^m}$ denote the probability distributions of $Y_A^m$ and  $Y^m$, respectively.

\section{Information-Theoretic Attack} \label{Information-Theoretic_Attack}
\label{SEC:ITA}
\subsection{Generalized Stealth Attacks}

When these two information-theoretic objectives are considered by the attacker, \cite{Sun_information-theoretic_2017} proposes an stealthy attack construction that combines the two objectives in one cost function, i.e.,
\begin{IEEEeqnarray}{c}\label{Equ:Steallth_Obj}
 I(X^{n};Y^{m}_{A}) \hspace{-0.1em} +  \hspace{-0.2em} D( P_{{Y}^{m}_{A}}\|P_{Y^{m}}) \hspace{-0.2em} =  \hspace{-0.2em} D( P_{X^{n}Y_{A}^{m}}\|P_{X^{n}}P_{Y^{m}}),\hspace{-0.2em} \IEEEeqnarraynumspace
\end{IEEEeqnarray}
where $P_{X^{n}Y_{A}^{m}}$ is the joint distribution of $X^{n}$ and $Y_{A}^{m}$.
The resulting optimization problem to construct the attack is given by
\begin{align} \label{Stealth_Obj}
\underset{A^{m}}{\text{min}} \ D( P_{X^{n}Y_{A}^{m}}\|P_{X^{n}}P_{Y^{m}}).
\end{align}
Therein, it is shown that (\ref{Stealth_Obj}) is a convex optimization problem and the covariance matrix of the optimal Gaussian attack is $\Sigmam_{A\!A} = \Hm\Sigmam_{X\!X}\Hm^{{\text{\rm T}}}$.
%
However, numerical simulations on IEEE test system show that the attack construction proposed above yields large values of probability of detection in practical settings.

To address the issue of high probability of detection, in the following we propose an attack construction strategy that tunes the probability of detection with a parameter that weights the detection term in the cost function.
The resulting optimization problem is given by
\begin{IEEEeqnarray}{c}\label{Equ:WeightSum}
\underset{A^{m}}{\text{min}} \  I(X^{n};Y^{m}_{A})  +  \lambda D( P_{{Y}^{m}_{A}}\|P_{Y^{m}})  \IEEEyesnumber,
\end{IEEEeqnarray}
where $\lambda\geq1$ governs the weight given to each objective in the cost function. It is interesting to note that for the case in which $\lambda=1$ the proposed cost function boils down to the effective secrecy proposed in \cite{hou_effective_2014} and the attack construction in (\ref{Equ:WeightSum}) coincides with that in \cite{Sun_information-theoretic_2017}.
For $\lambda>1$, the attacker adopts a conservative approach and prioritizes remaining undetected over minimizing the amount of information acquired by the operator.
By increasing the value of $\lambda$ the attacker decreases the probability of detection at the expense of increasing the amount of information acquired by the operator via the measurements. The case for $\lambda<1$ requires a different treatment and is left as future work.

\subsection{Optimal Attack Construction}
\label{sec:opt_att}

The attack construction in (\ref{Equ:WeightSum}) is formulated in a general setting. The following propositions particularize the KL divergence and mutual information to our multivariate Gaussian setting.
\begin{proposition} {\rm\cite{cover_elements_2012}}
The KL divergence between $m$-dimensional multivariate Gaussian distributions $ \Nc(\zerov,\Sigmam_{Y_{A}\!Y_{A}}) $ and $\Nc(\zerov,\Sigmam_{Y\!Y}) $ is given by
\begin{IEEEeqnarray}{l}\label{KL}
D( P_{Y^{m}_{A}}\|P_{Y^{m}}) \hspace{-0.1em} = \hspace{-0.1em}  \frac{1}{2} \hspace{-0.1em} \left( \hspace{-0.1em}\log \hspace{-0.1em} \frac{|\Sigmam_{Y\!Y}|}{| \Sigmam_{Y_{A}\!Y_{A}}|} - m +\text{\rm tr}\left(\Sigmam_{Y\!Y}^{-\!1}\Sigmam_{Y_{A}\!Y_{A}} \hspace{-0.1em} \right) \hspace{-0.1em}\right). \IEEEeqnarraynumspace \squeezedequation
\end{IEEEeqnarray}
\end{proposition}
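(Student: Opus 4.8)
The plan is to evaluate the divergence directly from its definition, $D(P_{Y^{m}_{A}}\|P_{Y^{m}}) = \EE_{P_{Y^{m}_{A}}}\!\left[\log\big(f_{Y^{m}_{A}}(\yv)/f_{Y^{m}}(\yv)\big)\right]$, where $f_{Y^{m}_{A}}$ and $f_{Y^{m}}$ are the two zero-mean Gaussian densities with covariances $\Sigmam_{Y_{A}\!Y_{A}}$ and $\Sigmam_{Y\!Y}$, respectively. First I would substitute the explicit Gaussian density $f(\yv) = (2\pi)^{-m/2}|\Sigmam|^{-1/2}\exp\!\big(-\tfrac{1}{2}\yv^{\Tt}\Sigmam^{-1}\yv\big)$ for each distribution. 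The ambient constants $(2\pi)^{-m/2}$ cancel in the ratio, so the log-ratio collapses to the deterministic term $\tfrac{1}{2}\log\big(|\Sigmam_{Y\!Y}|/|\Sigmam_{Y_{A}\!Y_{A}}|\big)$ coming from the normalizing determinants, plus the difference of two quadratic forms, $-\tfrac{1}{2}\yv^{\Tt}\Sigmam_{Y_{A}\!Y_{A}}^{-1}\yv + \tfrac{1}{2}\yv^{\Tt}\Sigmam_{Y\!Y}^{-1}\yv$.

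Next I would take the expectation of this expression with respect to $Y^{m}_{A}\sim\Nc(\zerov,\Sigmam_{Y_{A}\!Y_{A}})$. The determinant term is constant and passes through unchanged. For each of the two quadratic forms I would invoke the standard identity $\EE[\yv^{\Tt}\Mm\yv] = \trace\!\big(\Mm\,\EE[\yv\yv^{\Tt}]\big)$, valid for any zero-mean vector, which here replaces $\EE[\yv\yv^{\Tt}]$ by the covariance $\Sigmam_{Y_{A}\!Y_{A}}$. The first quadratic form then contributes $\trace\!\big(\Sigmam_{Y_{A}\!Y_{A}}^{-1}\Sigmam_{Y_{A}\!Y_{A}}\big) = \trace(\Id_{m}) = m$, while the second contributes $\trace\!\big(\Sigmam_{Y\!Y}^{-1}\Sigmam_{Y_{A}\!Y_{A}}\big)$. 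Collecting the determinant term with $-\tfrac{1}{2}m$ and $+\tfrac{1}{2}\trace(\Sigmam_{Y\!Y}^{-1}\Sigmam_{Y_{A}\!Y_{A}})$ reproduces exactly the claimed expression \eqref{KL}.

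There is essentially no obstacle here, as the statement is a textbook computation. The only point meriting a line of justification is the quadratic-form identity, which follows by writing $\yv^{\Tt}\Mm\yv = \trace(\Mm\yv\yv^{\Tt})$ and exchanging trace with expectation by linearity. I would also note, for completeness, that both $\Sigmam_{Y\!Y} = \Hm\Sigmam_{X\!X}\Hm^{\Tt} + \sigma^{2}\Id_{m}$ and $\Sigmam_{Y_{A}\!Y_{A}} = \Hm\Sigmam_{X\!X}\Hm^{\Tt} + \sigma^{2}\Id_{m} + \Sigmam_{A\!A}$ are positive definite, being $\sigma^{2}\Id_{m}$ plus positive-semidefinite terms, so that the determinants, the inverse $\Sigmam_{Y\!Y}^{-1}$, and all expectations appearing above are well defined and finite. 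This guarantees that the two distributions are mutually absolutely continuous and that the divergence is finite, closing the argument.
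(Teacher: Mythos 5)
Your proof is correct: the paper offers no proof of its own for this proposition, citing it directly to the textbook reference \cite{cover_elements_2012}, and your direct computation—expanding the log-density ratio, applying $\EE[\yv^{\Tt}\Mm\yv]=\trace(\Mm\,\EE[\yv\yv^{\Tt}])$, and collecting the determinant, $m$, and trace terms—is precisely the standard derivation underlying that citation. The added remark on positive definiteness of both covariance matrices (each being $\sigma^{2}\Id_{m}$ plus a positive-semidefinite term) properly justifies finiteness and absolute continuity, so nothing is missing.
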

\begin{proposition} {\rm\cite{cover_elements_2012}}
The mutual information between the vectors of random variables $X^{n} \sim \Nc(\zerov,\Sigmam_{X\!X})$ and  $Y_{A}^{m} \sim \Nc(\zerov,\Sigmam_{Y_{A}\!Y_{A}})$ is given by
\begin{align}\label{MI}
 I(X^{n};Y^{m}_{A}) = \frac{1}{2} \log \frac{|\Sigmam_{X\!X}||\Sigmam_{Y_{A}\!Y_{A}}|}{|\Sigmam|},
\end{align}
where $\Sigmam$ is the covariance matrix of the joint distribution of $(X^{n}, Y_{A}^{m})$.
\end{proposition}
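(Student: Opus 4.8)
The plan is to use the entropy decomposition of mutual information together with the closed-form differential entropy of a multivariate Gaussian. First I would write $I(X^{n}; Y^{m}_{A}) = h(X^{n}) + h(Y^{m}_{A}) - h(X^{n}, Y^{m}_{A})$, where $h(\cdot)$ denotes differential entropy. Since $(X^{n}, Y^{m}_{A})$ is jointly zero-mean Gaussian with covariance $\Sigmam$, all three terms are entropies of zero-mean Gaussian vectors, with covariances $\Sigmam_{X\!X}$, $\Sigmam_{Y_{A}\!Y_{A}}$, and $\Sigmam$, respectively.

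Next I would substitute the standard identity $h(W) = \frac{1}{2}\log\!\left((2\pi e)^{d}|\Sigmam_{W}|\right)$ for a $d$-dimensional Gaussian $W \sim \Nc(\zerov,\Sigmam_{W})$. The dimension-dependent constants then contribute $\frac{n}{2}+\frac{m}{2}-\frac{n+m}{2}=0$, so the $(2\pi e)$ factors cancel identically and one is left with
\[
I(X^{n}; Y^{m}_{A}) = \tfrac{1}{2}\left(\log|\Sigmam_{X\!X}| + \log|\Sigmam_{Y_{A}\!Y_{A}}| - \log|\Sigmam|\right),
\]
which is the claimed expression after combining logarithms.

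A second route, which reuses Proposition 1 verbatim, is to write $I(X^{n}; Y^{m}_{A}) = D(P_{X^{n}Y^{m}_{A}}\|P_{X^{n}}P_{Y^{m}_{A}})$. Both arguments are zero-mean Gaussians of dimension $n+m$: the joint law has covariance $\Sigmam$, while the product of marginals has block-diagonal covariance $\Sigmam_{0} = \diag(\Sigmam_{X\!X}, \Sigmam_{Y_{A}\!Y_{A}})$. Feeding these into the Gaussian KL formula of Proposition 1 produces a $\log(|\Sigmam_{0}|/|\Sigmam|)$ term with $|\Sigmam_{0}| = |\Sigmam_{X\!X}|\,|\Sigmam_{Y_{A}\!Y_{A}}|$, while a short block computation shows that $\Sigmam_{0}^{-1}\Sigmam$ has identity diagonal blocks, so $\text{\rm tr}(\Sigmam_{0}^{-1}\Sigmam) = n+m$ exactly cancels the $-(n+m)$ term, again yielding the stated formula.

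There is no genuine obstacle here, as the result is a textbook identity (hence the citation to Cover and Thomas). The only points requiring mild care are the dimensional bookkeeping of the entropy constants in the first approach and, in the second approach, the observation that the off-diagonal blocks of $\Sigmam_{0}^{-1}\Sigmam$ do not enter the trace; both are routine.
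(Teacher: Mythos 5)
Your proof is correct: the paper states this proposition without proof, simply citing Cover and Thomas, and your first route (the decomposition $I(X^{n};Y^{m}_{A}) = h(X^{n}) + h(Y^{m}_{A}) - h(X^{n},Y^{m}_{A})$ combined with the Gaussian differential-entropy formula, with the $(2\pi e)$ constants cancelling) is exactly the standard textbook derivation being invoked. Your second route via the KL characterization and Proposition 1 is also sound --- the block computation giving $\text{tr}(\Sigmam_{0}^{-1}\Sigmam) = n+m$ is right --- so either argument suffices.
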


Substituting (\ref{KL}) and (\ref{MI}) in (\ref{Equ:WeightSum})  we can now pose the Gaussian attack construction as the following optimization problem:
\begin{IEEEeqnarray}{cl}\label{Equ:Weight_Mod}
\underset{\Sigmam_{A\!A} \in \Sc^{m}_{+}}{\text{min}} \ &-(\lambda - 1)\log|\Sigmam_{Y\!Y} + \Sigmam_{A\!A}| - \log |\Sigmam_{A\!A}+\sigma^{2}\Id_{m}| \IEEEnonumber \\
&\quad + \lambda \trace(\Sigmam_{Y\!Y}^{-\!1}\Sigmam_{A\!A})  \IEEEyesnumber.
\end{IEEEeqnarray}
We now proceed to solve the optimization problem above. First, note that the optimization domain $\Sc^{m}_{+}$ is a convex set. The following proposition characterizes the convexity of the cost function.
\begin{proposition}\label{Stealth_Model}
Let $\lambda \geq 1 $.  Then the cost function in the optimization problem in (\ref{Equ:Weight_Mod}) is convex.
\end{proposition}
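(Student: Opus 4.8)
The plan is to split the objective of (\ref{Equ:Weight_Mod}) into its three summands and to prove convexity of each, invoking the fact that a nonnegative combination of convex functions is convex. Write the cost as the sum of $f_1(\Sigmam_{A\!A}) = -(\lambda-1)\log|\Sigmam_{Y\!Y}+\Sigmam_{A\!A}|$, $f_2(\Sigmam_{A\!A}) = -\log|\Sigmam_{A\!A}+\sigma^{2}\Id_{m}|$, and $f_3(\Sigmam_{A\!A}) = \lambda\,\trace(\Sigmam_{Y\!Y}^{-1}\Sigmam_{A\!A})$. The term $f_3$ is linear in $\Sigmam_{A\!A}$, since $\Sigmam_{Y\!Y}^{-1}$ is a fixed matrix, and is therefore trivially convex; the real work lies in the two log-determinant terms.

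The key ingredient I would use is the standard fact that the map $\Sigmam\mapsto\log|\Sigmam|$ is concave on the cone of positive definite matrices, so that $-\log|\Sigmam|$ is convex there. I would verify this by restriction to an arbitrary line: for a positive definite $\Sigmam$ and a symmetric perturbation $\Vm$, the scalar function $t\mapsto\log|\Sigmam+t\Vm|$ equals $\log|\Sigmam|+\sum_{i}\log(1+t\mu_i)$, where the $\mu_i$ are the (real) eigenvalues of $\Sigmam^{-1/2}\Vm\Sigmam^{-1/2}$; its second derivative $-\sum_i \mu_i^2/(1+t\mu_i)^2$ is nonpositive, which gives concavity along every line and hence joint concavity.

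With this in hand, both $f_1$ and $f_2$ are compositions of the convex function $-\log|\cdot|$ with the affine maps $\Sigmam_{A\!A}\mapsto\Sigmam_{A\!A}+\Sigmam_{Y\!Y}$ and $\Sigmam_{A\!A}\mapsto\Sigmam_{A\!A}+\sigma^{2}\Id_{m}$, respectively, and convexity is preserved under precomposition with an affine map. Since $\Sigmam_{Y\!Y}=\Hm\Sigmam_{X\!X}\Hm^{\Tt}+\sigma^{2}\Id_{m}$ is positive definite and $\sigma^{2}>0$, the arguments of both determinants remain strictly positive definite for every $\Sigmam_{A\!A}\in\Sc^{m}_{+}$, so the two functions are finite and smooth throughout the (closed) domain and the boundary of the log-determinant domain is never reached. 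Finally, the prefactor of $f_1$ is handled by the hypothesis: $\lambda\geq1$ makes $(\lambda-1)\geq0$, and scaling a convex function by a nonnegative constant preserves convexity; summing the three convex pieces then yields the claim.

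The argument is largely routine once the concavity of the log-determinant and the affine-composition rule are established, so I do not expect a serious obstacle. The one point that genuinely requires the stated hypothesis is the sign of $(\lambda-1)$: for $\lambda<1$ the term $f_1$ would become a positive multiple of the \emph{concave} map $\log|\Sigmam_{Y\!Y}+\Sigmam_{A\!A}|$, the decomposition would no longer be a sum of convex functions, and a separate treatment would be needed, consistent with the remark following (\ref{Equ:WeightSum}).
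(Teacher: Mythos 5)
Your proof is correct and follows essentially the same route as the paper's: the same decomposition into the two log-determinant terms plus the linear trace term, convexity of the negative log-determinant under affine precomposition, the observation that $\lambda \geq 1$ makes the coefficient $(\lambda-1)$ nonnegative, and closure of convexity under nonnegative sums. The only difference is cosmetic — you derive the concavity of $\log|\cdot|$ from first principles via restriction to lines, whereas the paper simply cites the standard reference — so no further comparison is needed.
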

\begin{proof}
Note that the term $-\log|\Sigmam_{A\!A}+\sigma^{2}\Id_{m}|$ is a convex function on $\Sigmam_{A\!A} \in \Sc^{m}_{+} $
\cite{boyd_convex_2004}.
Additionally, $-(\lambda - 1)\log|\Sigmam_{Y\!Y} + \Sigmam_{A\!A}|$ is a convex function on $\Sigmam_{A\!A} \in \Sc^{m}_{+} $ when $\lambda \geq 1$. Since the trace operator is a linear operator and the sum of convex functions is convex, it follows that the cost function in (\ref{Equ:Weight_Mod}) is convex on $\Sigmam_{A\!A}\in \Sc^{m}_{+}$.
\end{proof}

\begin{theorem} \label{Stealth_OPT}
Let $\lambda \geq 1$. Then the solution to the optimization problem in (\ref{Equ:Weight_Mod})
 is
 \begin{equation}
 \label{eq:att_cons}
 \Sigmam_{A\!A}^{\star} = \frac{1}{\lambda}\Hm\Sigmam_{X\!X}\Hm^{{\text{\rm T}}}.
 \end{equation}

\end{theorem}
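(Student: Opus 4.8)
The plan is to leverage the convexity already established in Proposition \ref{Stealth_Model}. Since the objective in (\ref{Equ:Weight_Mod}) is convex and the feasible set $\Sc^{m}_{+}$ is a convex cone, every point satisfying the Karush--Kuhn--Tucker (KKT) first-order conditions is a global minimizer. Hence it suffices to produce a feasible matrix meeting the optimality conditions and to show that the candidate $\Sigmam_{A\!A}^{\star} = \frac{1}{\lambda}\Hm\Sigmam_{X\!X}\Hm^{\text{\rm T}}$ is such a point. Feasibility is immediate: since $\Hm\Sigmam_{X\!X}\Hm^{\text{\rm T}} \in \Sc^{m}_{+}$ and $\lambda>0$, we have $\Sigmam_{A\!A}^{\star}\in\Sc^{m}_{+}$.

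First I would differentiate the cost function with respect to $\Sigmam_{A\!A}$ using the standard matrix identities $\nabla_{\Xm}\log|\Xm| = \Xm^{-1}$ and $\nabla_{\Xm}\trace(\Am\Xm) = \Am^{\text{\rm T}}$ for symmetric arguments. Setting aside the cone constraint for the moment, the interior stationarity equation reads
\[
\lambda\,\Sigmam_{Y\!Y}^{-1} = (\lambda-1)\big(\Sigmam_{Y\!Y}+\Sigmam_{A\!A}\big)^{-1} + \big(\Sigmam_{A\!A}+\sigma^{2}\Id_{m}\big)^{-1},
\]
and the task is to verify that $\Sigmam_{A\!A}^{\star}$ solves this matrix equation.

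The key structural observation is that $\Sigmam_{Y\!Y} = \Hm\Sigmam_{X\!X}\Hm^{\text{\rm T}} + \sigma^{2}\Id_{m}$ and the candidate $\Sigmam_{A\!A}^{\star} = \frac{1}{\lambda}\Hm\Sigmam_{X\!X}\Hm^{\text{\rm T}}$ are both polynomials in the single matrix $\Mm \defines \Hm\Sigmam_{X\!X}\Hm^{\text{\rm T}}$. Consequently $\Mm$, $\Sigmam_{Y\!Y}$, $\Sigmam_{Y\!Y}+\Sigmam_{A\!A}^{\star}$, and $\Sigmam_{A\!A}^{\star}+\sigma^{2}\Id_{m}$ all commute and are simultaneously diagonalizable by a common orthogonal matrix. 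Projecting the stationarity equation onto this eigenbasis collapses the matrix identity into one scalar identity per eigenvalue $\mu \geq 0$ of $\Mm$, and since all four matrices are functions of $\Mm$ alone, the matrix equation holds if and only if the scalar equation holds for every $\mu$. This reduces the verification to elementary algebra in $\mu$, $\sigma^{2}$, and $\lambda$.

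The main obstacle is twofold. The first and central difficulty is the per-eigenvalue verification of the stationarity identity: this is the heart of the argument and the step requiring the most care, since it is where the precise value of the scaling factor $1/\lambda$ is pinned down. The second is the positive semidefinite constraint when $\Mm$ is rank deficient (for instance when $m>n$): in that case $\Sigmam_{A\!A}^{\star}$ lies on the boundary of $\Sc^{m}_{+}$, so instead of pure interior stationarity I would write the full KKT system with a dual multiplier $\Psim \in \Sc^{m}_{+}$ and complementary slackness $\trace(\Psim\Sigmam_{A\!A}^{\star}) = 0$, and check stationarity, primal feasibility, dual feasibility, and complementary slackness jointly. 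The convexity from Proposition \ref{Stealth_Model} then upgrades this KKT point to the global optimum, completing the proof.
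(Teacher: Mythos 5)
Your plan follows the same route as the paper's own proof---first-order stationarity combined with the convexity established in Proposition \ref{Stealth_Model}---and your two refinements (reduction to per-eigenvalue scalar equations in a common eigenbasis of $\Mm \defines \Hm\Sigmam_{X\!X}\Hm^{\Tt}$, and a KKT treatment of the boundary of $\Sc^{m}_{+}$ when $\Mm$ is rank deficient) are exactly what a rigorous version of that argument requires. Note also that the paper's gradient, which carries extra $\text{diag}(\cdot)$ correction terms for the symmetric-matrix derivative, collapses when set to zero to precisely the stationarity equation you wrote, so your first-order condition is the right one.

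The genuine gap is that the step you defer---the per-eigenvalue verification you yourself call the heart of the argument---is never carried out, and it cannot be carried out, because the identity is false for $\lambda>1$. In the common eigenbasis, let $\mu\geq 0$ be an eigenvalue of $\Mm$ and $s=\sigma^{2}$; the corresponding eigenvalues of $\Sigmam_{Y\!Y}$, $\Sigmam_{Y\!Y}+\Sigmam_{A\!A}^{\star}$, and $\Sigmam_{A\!A}^{\star}+\sigma^{2}\Id_{m}$ are $\mu+s$, $\left(1+\frac{1}{\lambda}\right)\mu+s$, and $\frac{\mu}{\lambda}+s$, so your stationarity equation demands
\begin{equation}
\frac{\lambda}{\mu+s}=\frac{\lambda-1}{\left(1+\frac{1}{\lambda}\right)\mu+s}+\frac{1}{\frac{\mu}{\lambda}+s}
\end{equation}
for every $\mu$. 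Clearing denominators, the scalar stationarity condition for a generic attack eigenvalue $a$ reads $\lambda a(a+s)=\mu(\mu+s)$, and $a=\mu/\lambda$ satisfies it if and only if $\mu=0$ or $\lambda=1$. A numerical check makes the failure concrete: for $\lambda=2$ and $\mu=s=1$ the left-hand side above equals $1$ while the right-hand side equals $\frac{1}{2.5}+\frac{1}{1.5}=\frac{16}{15}$. The unique minimizer of (\ref{Equ:Weight_Mod}) (the cost is strictly convex, since $-\log|\Sigmam_{A\!A}+\sigma^{2}\Id_{m}|$ is strictly convex on $\Sc^{m}_{+}$) is instead diagonal in this eigenbasis with entries $a^{\star}(\mu)=-\frac{s}{2}+\sqrt{\frac{s^{2}}{4}+\frac{\mu(\mu+s)}{\lambda}}$, which coincides with $\mu/\lambda$ only in the degenerate cases above and approximately in the low-SNR regime $\mu\ll s$. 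Your KKT fallback cannot rescue the candidate either: complementary slackness $\trace(\Psim\Sigmam_{A\!A}^{\star})=0$ with $\Psim\in\Sc^{m}_{+}$ forces $\Psim$ to vanish on the range of $\Mm$, so unconstrained stationarity must hold on every eigendirection with $\mu>0$, and it does not. To be clear, this is not a defect of your plan relative to the paper: the paper's proof simply asserts that $\frac{1}{\lambda}\Hm\Sigmam_{X\!X}\Hm^{\Tt}$ is the only critical point without checking it, and your verification step, once actually executed, shows that this assertion---and hence Theorem \ref{Stealth_OPT} as stated for $\lambda>1$---cannot be sustained.
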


\begin{proof}
Denote the cost function in (\ref{Equ:Weight_Mod})  by $f(\Sigmam_{A\!A})$. Taking the derivative of the cost function with respect to $\Sigmam_{A\!A}$ yields 
\begin{IEEEeqnarray}{ll}
 \frac{\partial  f(\Sigmam_{A\!A})}{\ \partial\Sigmam_{A\!A}} \hspace{-0.2em}
   = &-2(\lambda - 1)(\Sigmam_{Y\!Y} + \Sigmam_{A\!A})^{-\!1} \!-\!  2(\Sigmam_{A\!A}+\sigma^{2}\Id_m)^{-\!1} \IEEEnonumber\\
& \ +  2\lambda \Sigmam_{Y\!Y}^{-\!1} + (\lambda - 1)\text{diag}\left((\Sigmam_{Y\!Y} + \Sigmam_{A\!A})^{-\!1}\right) \IEEEnonumber\\
& \ +  \text{diag}\left((\Sigmam_{A\!A}+\sigma^{2}\Id_m)^{-\!1}\right) -\lambda \text{diag}(\Sigmam_{YY}^{-\!1}). \IEEEeqnarraynumspace
\end{IEEEeqnarray}
Note that the only critical point is $\Sigmam_{A\!A}^{\star} = \frac{1}{\lambda} \Hm\Sigmam_{X\!X}\Hm^{{\text{\rm T}}}$.
Theorem \ref{Stealth_OPT} follows immediately from combining this result with Proposition \ref{Stealth_Model}.
\end{proof}
\begin{corollary}\label{Cor_MI}
The mutual information between the vector of state variables and the vector of compromised measurements induced by the optimal attack construction is given by
\begin{IEEEeqnarray}{ll}
&I(X^n;Y_{A}^n) \IEEEnonumber\\
& \ \ =  \frac 1 2 \log\left | \Hm \mathbf{\Sigma}_{X\!X}\Hm^{\textnormal{\Tt}} \left(\sigma^2\Id_m+\frac{1}{\lambda}\Hm \mathbf{\Sigma}_{X\!X}\Hm^{\textnormal{\Tt}}\right)^{-1}\hspace{-0.7em}+\Id_m\right |. \IEEEeqnarraynumspace
\end{IEEEeqnarray}
\end{corollary}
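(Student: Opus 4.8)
The plan is to start from the closed-form mutual information in Proposition~2 and reduce the determinant ratio to a form involving only the conditional covariance of $Y_A^m$ given $X^n$. Writing the joint covariance of $(X^n,Y_A^m)$ as the block matrix
\begin{IEEEeqnarray}{c}
\Sigmam = \begin{pmatrix} \Sigmam_{X\!X} & \Sigmam_{X\!X}\Hm^{\Tt} \\ \Hm\Sigmam_{X\!X} & \Sigmam_{Y_{A}\!Y_{A}} \end{pmatrix},
\end{IEEEeqnarray}
where the off-diagonal blocks $\Sigmam_{X\!X}\Hm^{\Tt}$ and $\Hm\Sigmam_{X\!X}$ follow from the independence of $A^m$ and $Z^m$ from $X^n$ together with $Y_A^m = \Hm X^n + Z^m + A^m$, I would apply the Schur-complement identity $|\Sigmam| = |\Sigmam_{X\!X}|\,|\Sigmam_{Y_{A}\!Y_{A}} - \Hm\Sigmam_{X\!X}\Hm^{\Tt}|$. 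Substituting this into (\ref{MI}) cancels the $|\Sigmam_{X\!X}|$ factor and yields
\begin{IEEEeqnarray}{c}
I(X^n;Y_A^m) = \frac{1}{2}\log\frac{|\Sigmam_{Y_{A}\!Y_{A}}|}{|\Sigmam_{Y_{A}\!Y_{A}} - \Hm\Sigmam_{X\!X}\Hm^{\Tt}|}.
\end{IEEEeqnarray}

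Next I would identify the denominator as the conditional covariance $\sigma^{2}\Id_m + \Sigmam_{A\!A}$, using $\Sigmam_{Y_{A}\!Y_{A}} = \Hm\Sigmam_{X\!X}\Hm^{\Tt} + \sigma^{2}\Id_m + \Sigmam_{A\!A}$. Then I substitute the optimal attack covariance $\Sigmam_{A\!A}^{\star} = \frac{1}{\lambda}\Hm\Sigmam_{X\!X}\Hm^{\Tt}$ from Theorem~\ref{Stealth_OPT} into both determinants, so that the numerator becomes $|\Hm\Sigmam_{X\!X}\Hm^{\Tt} + \sigma^{2}\Id_m + \frac{1}{\lambda}\Hm\Sigmam_{X\!X}\Hm^{\Tt}|$ and the denominator becomes $|\sigma^{2}\Id_m + \frac{1}{\lambda}\Hm\Sigmam_{X\!X}\Hm^{\Tt}|$. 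Using $|A|/|B| = |AB^{-1}|$ and factoring, the ratio collapses to $|\Hm\Sigmam_{X\!X}\Hm^{\Tt}(\sigma^{2}\Id_m + \frac{1}{\lambda}\Hm\Sigmam_{X\!X}\Hm^{\Tt})^{-1} + \Id_m|$, which is exactly the claimed expression.

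I do not anticipate any serious obstacle; the only point requiring care is the Schur-complement determinant factorization and the bookkeeping of the cross-covariance blocks, both of which are routine once the independence of the attack and the noise from the state is invoked. An alternative route that avoids writing out $\Sigmam$ explicitly is to use the decomposition $I(X^n;Y_A^m) = h(Y_A^m) - h(Y_A^m\mid X^n)$ and to observe directly that, conditioned on $X^n$, the vector $Y_A^m$ is Gaussian with covariance $\sigma^{2}\Id_m + \Sigmam_{A\!A}$; this produces the same determinant ratio immediately and is arguably cleaner.
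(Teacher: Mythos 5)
Your proposal is correct: the paper states Corollary \ref{Cor_MI} without proof, treating it as an immediate consequence of substituting the optimal covariance $\Sigmam_{A\!A}^{\star} = \frac{1}{\lambda}\Hm\Sigmam_{X\!X}\Hm^{\Tt}$ from Theorem \ref{Stealth_OPT} into the Gaussian mutual information expression of Proposition 2, which is exactly the computation you carry out. Both of your routes (the Schur-complement reduction of $|\Sigmam|$ and the cleaner $h(Y_A^m)-h(Y_A^m\mid X^n)$ decomposition, which also sidesteps any invertibility assumption on $\Sigmam_{X\!X}$) correctly yield $\frac{1}{2}\log\bigl(|\Sigmam_{Y_A Y_A}|/|\sigma^2\Id_m+\Sigmam_{A\!A}|\bigr)$ and hence the stated determinant identity.
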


%
Theorem \ref{Stealth_OPT} shows that the generalized stealth attacks share the same structure of the stealth attacks in \cite{Sun_information-theoretic_2017} up to a scaling factor determined by $\lambda$.
The solution in Theorem \ref{Stealth_OPT} holds for the case in which $\lambda\geq 1$, and therefore, lacks full generality.
However, the case in which $\lambda <1$ yields unreasonably high probability of detection \cite{Sun_information-theoretic_2017} which indicates that the proposed attack construction is indeed of practical interest in a wide range of state estimation settings.
Furthermore the optimization problem in (\ref{Equ:Weight_Mod}) results in a non-convex problem when $\lambda < 1$ and the solution obtained above no longer holds. For this reason the case with $\lambda < 1$ is left as a future research question.

Changing the value of $\lambda$ yields different solutions on the Pareto front of the optimization problem in (\ref{Equ:Weight_Mod}) as we show in the numerical results in Section \ref{Subsec:Numerical_Sensitive}.
For any $\lambda \geq 1$, Theorem \ref{Stealth_OPT} guarantees that the generalized stealth attack is the only Parento efficient solution, i.e. the attack construction that minimizes the mutual information subject to the probability of detection constraint being satisfied. By increasing the value of $\lambda$ the attacker places more importance on the probability of detection than on the mutual information which results in a more conservative attack that disrupts less but is more difficult to detect.

Theorem \ref{Stealth_OPT} also shows that the resulting attack construction is remarkably simple to implement provided that the information about the system is available to the attacker.
Indeed, the attacker only requires access to the linearized Jacobian measurement matrix $\Hm$ and the second order statistics of the state variables, but the variance of the noise introduced by the sensors is not necessary.
To obtain the Jacobian, a malicious attacker needs to know the topology of the grid, the admittances of the branches, and the operation point of the system.
The second order statistics of the state variables on the other hand, can be estimated using historical data.
In \cite{Sun_information-theoretic_2017} it is shown that the attack construction with a sample covariance matrix of the state variables obtained with historical data is asymptotically optimal when the size of the training data grows to infinity.

Corollary \ref{Cor_MI} shows that the mutual information increases monotonically with $\lambda$ and that it asymptotically converges to $I(X^n;Y^m)$, i.e. the case in which there is no attack.
While the evaluation of the mutual information as shown in Corollary \ref{Cor_MI} is straightforward, the computation of the associated probability of detection yields involved expressions that do not provide much insight. For that reason, the probability of detection of optimal attacks is treated in the following section.

\section{Probability of Detection of Generalized Stealth Attacks}
\label{SEC:PDCI}

The asymptotic probability of detection of the generalized stealth attacks characterized in Section \ref{sec:opt_att} is governed by the KL divergence as described in (\ref{eq:Chernoff-Stein}). However in the non-asymptotic case, determining the probability of detection is difficult, and therefore, choosing a value of $\lambda$ that provides the desired probability of detection is a challenging task.
In this section we first provide a closed-form expression of the probability of detection by direct evaluation and show that the expression does not provide any practical insight over the choice of $\lambda$ that achieves the desired detection performance. That being the case, we then provide an upper bound on the probability of detection, which, in turn, provides a lower bound on the value of $\lambda$ that achieves the desired probability of detection.

\subsection{Direct Evaluation of the Probability of Detection}
\label{Sec:DEPD}
Detection based on the LRT with threshold $\tau$ yields a probability of detection given by
\begin{align}\label{Equ:PD_D}
{\sf P}_{\sf D} \eqdef \EE\left[\mathbbm{1}_{\left\{L(Y_A^m) \geq \tau\right\}}\right] ,
\end{align}
where $\mathbbm{1}_{\{\cdot\}}$ is the indicator function. The following proposition particularizes the above expression to the optimal attack construction described in Section \ref{sec:opt_att}.

\begin{lemma}\label{Pro_PD}
The probability of detection of the LRT in (\ref{LHRT}) for the attack construction in (\ref{eq:att_cons}) is given by
\begin{IEEEeqnarray}{l}\label{Pro_PD_1}
{\sf P}_{\sf D}(\lambda)  \eqdef \PP\left[{(U^p)}^{\textnormal{\Tt}} \Deltam U^p \geq \lambda\left(2 \log\tau + \log \left|\Id_{p} + \lambda^{-1}\Deltam\right|\right)\right], \IEEEeqnarraynumspace \squeezedequation
\end{IEEEeqnarray}
where $p\eqdef\text{rank} (\Hm \mathbf{\Sigma}_{X\!X}\Hm^{\textnormal{\Tt}})$, $U^p\in\mathbb{R}^p$ is a vector of random variables with distribution $\Nc(\zerov,  \Id_{p})$, and $\Deltam\in\mathbb{R}^{p\times p}$ is a diagonal matrix with entries given by $(\Deltam)_{i,i}=\lambda_i(\Hm \mathbf{\Sigma}_{X\!X}\Hm^{\textnormal{\Tt}})\lambda_i(\Sigmam_{Y\!Y}^{-\!1})$, where $\lambda_i(\Am)$ with $i=1,\ldots, p$ denotes the $i$-th eigenvalue of  matrix $\Am$ in descending order.
\end{lemma}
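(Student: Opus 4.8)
The plan is to reduce the likelihood-ratio event in (\ref{LHRT}) to a quadratic form in a standard Gaussian vector, exploiting the fact that all of the relevant covariance matrices share a common eigenbasis. First I would write out the log-likelihood ratio for the two zero-mean Gaussians. Using that each density satisfies $f(\yv)\propto|\Sigmam|^{-1/2}\exp(-\tfrac{1}{2}\yv^{\Tt}\Sigmam^{-1}\yv)$ and taking logarithms, the test $L(\yv)\geq\tau$ is seen to be equivalent to the quadratic inequality
\[
\yv^{\Tt}\!\left(\Sigmam_{Y\!Y}^{-\!1}-\Sigmam_{Y_{A}\!Y_{A}}^{-\!1}\right)\!\yv\ \geq\ 2\log\tau+\log\frac{|\Sigmam_{Y_{A}\!Y_{A}}|}{|\Sigmam_{Y\!Y}|}.
\]
Since ${\sf P}_{\sf D}$ in (\ref{Equ:PD_D}) is the expectation under $\Hc_{1}$, this event must be evaluated for $\yv\sim\Nc(\zerov,\Sigmam_{Y_{A}\!Y_{A}})$.

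The second step is simultaneous diagonalization. Writing $\Mm\eqdef\Hm\Sigmam_{X\!X}\Hm^{\Tt}$ and substituting the optimal attack (\ref{eq:att_cons}) gives $\Sigmam_{Y\!Y}=\Mm+\sigma^{2}\Id_{m}$ and $\Sigmam_{Y_{A}\!Y_{A}}=(1+\lambda^{-1})\Mm+\sigma^{2}\Id_{m}$, so all three matrices are polynomials in $\Mm$ and are diagonalized by a common orthonormal basis. Ordering the eigenvalues of $\Mm$ as $\mu_{1}\geq\cdots\geq\mu_{p}>0=\mu_{p+1}=\cdots=\mu_{m}$ with $p=\text{rank}(\Mm)$, each eigenvector in the kernel of $\Mm$ reduces both $\Sigmam_{Y\!Y}$ and $\Sigmam_{Y_{A}\!Y_{A}}$ to $\sigma^{2}$, hence contributes $0$ to the difference of inverses and a factor $1$ to the determinant ratio. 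The problem therefore collapses onto the $p$-dimensional range of $\Mm$, and a direct evaluation of the determinant ratio gives $\log(|\Sigmam_{Y_{A}\!Y_{A}}|/|\Sigmam_{Y\!Y}|)=\sum_{i=1}^{p}\log(1+\lambda^{-1}\tfrac{\mu_{i}}{\mu_{i}+\sigma^{2}})=\log|\Id_{p}+\lambda^{-1}\Deltam|$, with $(\Deltam)_{i,i}=\mu_{i}/(\mu_{i}+\sigma^{2})$.

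The third step is to whiten the quadratic form, and this is where the announced expression emerges. On the $i$-th eigendirection the operator $\Sigmam_{Y\!Y}^{-\!1}-\Sigmam_{Y_{A}\!Y_{A}}^{-\!1}$ has eigenvalue $(\mu_{i}+\sigma^{2})^{-1}-((1+\lambda^{-1})\mu_{i}+\sigma^{2})^{-1}$, whereas under $\Hc_{1}$ the matching coordinate of $\yv$ has variance $(1+\lambda^{-1})\mu_{i}+\sigma^{2}$. Writing that coordinate as $\sqrt{(1+\lambda^{-1})\mu_{i}+\sigma^{2}}\,U_{i}$ with $U_{i}\sim\Nc(0,1)$, the variance factor cancels the denominator of the eigenvalue and leaves precisely $\lambda^{-1}\tfrac{\mu_{i}}{\mu_{i}+\sigma^{2}}U_{i}^{2}$. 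Summing over $i=1,\dots,p$ shows the left-hand quadratic form equals $\lambda^{-1}(U^{p})^{\Tt}\Deltam U^{p}$ in distribution, where $U^{p}=(U_{1},\dots,U_{p})^{\Tt}\sim\Nc(\zerov,\Id_{p})$. Substituting this identity together with the determinant identity into the reduced event and multiplying through by $\lambda$ yields (\ref{Pro_PD_1}).

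I expect the main obstacle to be the bookkeeping in this final cancellation: it is essential that the quadratic form be evaluated under the attacked covariance $\Sigmam_{Y_{A}\!Y_{A}}$ rather than the nominal $\Sigmam_{Y\!Y}$, because it is exactly the $\Hc_{1}$ variance $(1+\lambda^{-1})\mu_{i}+\sigma^{2}$ that cancels the matching factor in the eigenvalue of $\Sigmam_{Y\!Y}^{-\!1}-\Sigmam_{Y_{A}\!Y_{A}}^{-\!1}$. A secondary point requiring care is the eigenvalue pairing in $\Deltam$: because $\Sigmam_{Y\!Y}$ is a monotone function of $\Mm$, the eigenvector carrying $\mu_{i}=\lambda_{i}(\Mm)$ carries the eigenvalue $(\mu_{i}+\sigma^{2})^{-1}$ of $\Sigmam_{Y\!Y}^{-\!1}$, so the diagonal entries of $\Deltam$ must be read as products of eigenvalues taken along the common eigenbasis; the distribution of $(U^{p})^{\Tt}\Deltam U^{p}$ then depends only on this multiset, since $U^{p}$ is rotationally invariant.
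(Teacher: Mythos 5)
Your proposal is correct and follows essentially the same route as the paper's proof: reduce the LRT event to the quadratic inequality $\yv^{\Tt}(\Sigmam_{Y\!Y}^{-\!1}-\Sigmam_{Y_{A}\!Y_{A}}^{-\!1})\yv \geq 2\log\tau + \log(|\Sigmam_{Y_{A}\!Y_{A}}|/|\Sigmam_{Y\!Y}|)$, pass to the common eigenbasis (the paper via the rotation $\Um_{Y\!Y}$, you via the observation that all covariances are polynomials in $\Hm\Sigmam_{X\!X}\Hm^{\Tt}$), whiten with $\Lambdam_{Y_{A}\!Y_{A}}^{-1/2}$ under $\Hc_{1}$, and identify the resulting diagonal quadratic form as $\lambda^{-1}(U^{p})^{\Tt}\Deltam U^{p}$. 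Your explicit treatment of the kernel directions and of the eigenvalue pairing along the common eigenbasis is a slightly more careful rendering of the paper's final step, where it simply sets $\Deltam = \lambda\Lambdam_{A\!A}\Lambdam_{Y\!Y}^{-\!1}$ and invokes $\text{rank}(\Deltam)=\text{rank}(\Hm\Sigmam_{X\!X}\Hm^{\Tt})$.
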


\begin{proof}
The probability of detection of the stealth attack is,
\begin{IEEEeqnarray}{ll}
\label{Equ:PD_1}
{\sf P_D}(\lambda) &= \hspace{-0.3em} \int_{\Sc} \mathrm{d}P_{Y_{A}^{m}}\\
&= \hspace{-0.3em}\frac{1}{(2\pi)^{\frac{m}{2}}\left|\Sigmam_{Y_{A}\!Y_{A}}\right|^{\frac12}} \hspace{-0.3em} \int_{\Sc} \hspace{-0.1em} \exp\left \{-\frac{1}{2}\yv^{\Tt}\Sigmam_{Y_{A}\!Y_{A}}^{-\!1}\yv\right \} \mathrm{d} \yv, \label{Equ:PD_2} \IEEEeqnarraynumspace
\end{IEEEeqnarray}
where
\begin{equation}
\Sc =  \{ \yv \in \RR^{m} : L(\yv) \geq \tau\}.
\end{equation}
Algebraic manipulation yields the following equivalent description of the integration domain:
\begin{IEEEeqnarray}{l}
\label{eq:int_dom_2}
\Sc \hspace{-0.2em}  = \hspace{-0.2em}  \left \{ \yv \in \RR^{m}\hspace{-0.4em} : \yv^{\Tt} {\Deltam_0} \yv \hspace{-0.2em} \geq \hspace{-0.2em} 2 \log\tau \hspace{-0.2em}  +  \hspace{-0.2em}  \log |\Id_{m} + \Sigmam_{A\!A}\Sigmam_{Y\!Y}^{-\!1}|\right\},\IEEEeqnarraynumspace
\end{IEEEeqnarray}
with ${\Deltam_0} \eqdef \Sigmam_{Y\!Y}^{-\!1} - \Sigmam_{Y_{A}\!Y_{A}}^{-\!1} $. Let $\Sigmam_{Y\!Y}=\Um_{Y\!Y}\Lambdam_{Y\!Y}\Um_{Y\!Y}^{\Tt}$ where $\Lambdam_{Y\!Y}\in\RR^{m\times m}$ is a diagonal matrix containing the eigenvalues of $\Sigmam_{Y\!Y}$ in descending order and $\Um_{Y\!Y}\in\RR^{m\times m}$ is a unitary matrix whose columns are the eigenvectors of $\Sigmam_{Y\!Y}$ ordered matching the order of the eigenvalues. Applying the change of variable $\yv_{1} \eqdef \Um_{Y\!Y}\yv$ in (\ref{Equ:PD_2}) results in
\begin{IEEEeqnarray}{l}
{\sf P_D}(\lambda)\hspace{-0.2em} =\hspace{-0.2em}
\frac{1}{(2\pi)^{\frac{m}{2}}\hspace{-0.2em}\left|\Sigmam_{Y_{A}\!Y_{A}}\right|^{\frac12}} \!\!\int_{\Sc_1}\hspace{-0.5em}\exp\left \{-\frac{1}{2}\yv^{\Tt}_1\Lambdam_{Y_{A}\!Y_{A}}^{-\!1}\yv_1\right \} \mathrm{d} \yv_1, \label{Equ:PD_3} \IEEEeqnarraynumspace
\end{IEEEeqnarray}
{where $\Lambdam_{Y_{A}\!Y_{A}}\in\RR^{m\times m}$ denotes the diagonal matrix containing the eigenvalues of $\Sigmam_{Y_{A}\!Y_{A}}$ in descending order}.
Noticing that $\Sigmam_{A\!A}$ and $\Sigmam_{Y_{A}\!Y_{A}}$ are also diagonalized by $\Um_{Y\!Y}$,
the integration domain $\Sc_{1}$ is given by
\begin{IEEEeqnarray}{l}
\Sc_{1} \hspace{-0.2em} = \hspace{-0.2em} \left\{ \hspace{-0.1em} \yv_{1} \hspace{-0.2em} \in \hspace{-0.1em} \RR^{m} \hspace{-0.4em} : \yv_{1}^{\Tt} \Deltam_{1} \yv_{1} \hspace{-0.2em} \geq  \hspace{-0.2em} 2 \log\tau \hspace{-0.2em} +  \hspace{-0.2em} \log |\Id_{m} \hspace{-0.2em} + \hspace{-0.2em} \Lambdam_{A\!A}\Lambdam_{Y\!Y}^{-\!1}| \hspace{-0.1em} \right\},\IEEEeqnarraynumspace
\end{IEEEeqnarray}
where $\Deltam_{1} \eqdef \Lambdam_{Y\!Y}^{-\!1}-\Lambdam_{Y_{A}\!Y_{A}}^{-\!1}$ and $\Lambdam_{A\!A}$ denotes the diagonal matrix containing the eigenvalues of $\Sigmam_{{A}\!{A}}$ in descending order. Further applying the change of variable $\yv_{2} \eqdef \Lambdam_{Y_{A}\!Y_{A}}^{-\!\frac{1}{2}} \yv_{1}$ in (\ref{Equ:PD_3}) results in
\begin{equation}
{\sf P_D}(\lambda) =\frac{1}{\sqrt{(2\pi)^{m}}} \int_{\Sc_{2}} \exp\{-\frac{1}{2}\yv_{2}^{\Tt}\yv_{2}\} \mathrm{d} \yv_{2}, \label{Equ:PD_4}
\end{equation}
with the transformed integration domain given by
\begin{IEEEeqnarray}{l}
\Sc_{2} \hspace{-0.2em} = \hspace{-0.2em} \left\{ \yv_{2} \in \RR^{m} \hspace{-0.4em}: \yv_{2}^{\Tt} \Deltam_{2} \yv_{2} \geq 2 \log\tau \hspace{-0.1em} +  \hspace{-0.1em} \log |\Id_{m} \hspace{-0.2em} + \hspace{-0.2em} \Deltam_{2}|\right\},
\IEEEeqnarraynumspace
\end{IEEEeqnarray}
with
\begin{align}
\Deltam_{2} \eqdef \Lambdam_{A\!A} \Lambdam_{Y\!Y}^{-\!1} .
\end{align}
Setting $\Deltam \eqdef \lambda\Deltam_2$ and noticing that $\text{rank}(\Deltam)=\text{rank} (\Hm \mathbf{\Sigma}_{X\!X}\Hm^{\textnormal{\Tt}})$ concludes the proof.
\end{proof}

%
Lemma \ref{Pro_PD} shows that the probability of detection is equivalent to the probability that a weighted sum of independent $\chi^{2}$ random variables exceeds a certain threshold. In our setting, the threshold is determined by the trade-off parameter $\lambda$.
Notice that the left-hand term $(U^{p})^{\textnormal{\Tt}} \Deltam U^{p}$ in (\ref{Pro_PD_1}) is a weighted sum of independent $\chi^{2}$ distributed random variables with one degree of freedom where  the weights are determined by the diagonal entries of $\Deltam$ which depend on the second order statistics of the state variables, the Jacobian measurement matrix, and the variance of the noise; i.e. the attacker has no control over this term.
The right-hand side contains in addition $\lambda$ and $\tau$, and therefore,  the probability of attack detection is described as a function of the parameter $\lambda$.

Unfortunately, no closed-form expression is available for the distribution of a positively weighted sum of independent $\chi^2$ random variables with one degree of freedom \cite{bodenham_comparison_2016}.
Usually, some moment matching approximation approaches, such as the Lindsay–Pilla–Basak (LPB) method \cite{lindsay_moment-based_2000}, are utilized to solve this problem but the resulting expressions are complex and the relation of the probability of detection with $\lambda$ is difficult to describe analytically following this course of action.

In the following an upper bound on the probability of attack detection is derived. The upper bound is then used to provide a simple lower bound on the value $\lambda$ that achieves the desired probability of detection.

\subsection{Upper Bound on the Probability of Detection}

The following theorem provides a sufficient condition for $\lambda$ to achieve a desired probability of attack detection.

\begin{theorem}\label{pro_CI}
Let $\tau > 1$ be the decision threshold of the LRT in (\ref{LHRT}). Given $t>0$ it holds that for all $\lambda \geq \textnormal{max}\left(\lambda^{\star}(t),1\right)$ the probability of attack detection satisfies
\begin{equation}
{\sf P_D}(\lambda)\leq e^{-t},
\end{equation}
where $\lambda^{\star}(t)$ is the only positive solution of $\lambda$ satisfying
\begin{IEEEeqnarray}{c}\label{Equ:CI_SOI}
 2 \lambda \log\tau  - \frac{1}{2\lambda}\textnormal{\trace}(\Deltam^2) -2\sqrt{\textnormal{\trace}({\Deltam}^{2})t} \!-\! 2\|\Deltam\|_{\infty} t = 0,
\IEEEeqnarraynumspace
 \label{Equ:CI_IE}
\end{IEEEeqnarray}
and $\| \cdot \|_{\infty}$ is the infinity norm.
\end{theorem}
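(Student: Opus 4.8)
The plan is to recognize the detection event in Lemma~\ref{Pro_PD} as a one-sided tail for a \emph{centered} weighted sum of independent $\chi^2$ variables and to control that tail with the concentration inequality of Laurent and Massart~\cite{laurent_adaptive_2000}. Writing $\delta_i\eqdef(\Deltam)_{i,i}\geq 0$, so that $(U^p)^{\Tt}\Deltam U^p=\sum_{i=1}^{p}\delta_i U_i^2$ with the $U_i$ i.i.d.\ $\Nc(0,1)$, I would subtract $\trace(\Deltam)=\sum_i\delta_i$ from both sides of (\ref{Pro_PD_1}) to rewrite the event as $\sum_i\delta_i(U_i^2-1)\geq T(\lambda)$, where $T(\lambda)\eqdef\lambda\big(2\log\tau+\log|\Id_p+\lambda^{-1}\Deltam|\big)-\trace(\Deltam)$. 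Noting that $\trace(\Deltam^2)=\sum_i\delta_i^2$ is the squared Euclidean norm of the weight vector and $\|\Deltam\|_{\infty}=\max_i\delta_i$ is its supremum norm, the Laurent--Massart bound gives, for every $x>0$,
\begin{equation*}
\PP\!\left[\sum_{i=1}^{p}\delta_i(U_i^2-1)\geq 2\sqrt{\trace(\Deltam^2)\,x}+2\|\Deltam\|_{\infty}x\right]\leq e^{-x}.
\end{equation*}
It then suffices to show that, at $x=t$, the deterministic threshold $T(\lambda)$ dominates the right-hand level above; monotonicity of the tail in the threshold then yields ${\sf P_D}(\lambda)\leq e^{-t}$.

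The crux is to lower bound the awkward log-determinant term in $T(\lambda)$ by something linear in $\trace(\Deltam^2)$. First I would invoke the elementary inequality $\log(1+u)\geq u-u^2/2$, valid for all $u\geq 0$ (it follows at once because $u\mapsto\log(1+u)-u+u^2/2$ vanishes at $0$ and has derivative $u^2/(1+u)\geq 0$). Applying it termwise to $\log|\Id_p+\lambda^{-1}\Deltam|=\sum_i\log(1+\lambda^{-1}\delta_i)$ gives $\lambda\log|\Id_p+\lambda^{-1}\Deltam|\geq\trace(\Deltam)-\tfrac{1}{2\lambda}\trace(\Deltam^2)$, whence $T(\lambda)\geq 2\lambda\log\tau-\tfrac{1}{2\lambda}\trace(\Deltam^2)$. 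Consequently the sufficient condition for domination reduces exactly to $h(\lambda)\geq 0$, where $h(\lambda)$ denotes the left-hand side of (\ref{Equ:CI_IE}). This is the step that produces the precise $\tfrac{1}{2\lambda}\trace(\Deltam^2)$ term appearing in the statement, so the quadratic refinement of the logarithm bound is essential rather than cosmetic.

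Finally I would establish existence, uniqueness, and the correct sign pattern of the root. Since $\tau>1$ forces $\log\tau>0$, both $2\lambda\log\tau$ and $-\tfrac{1}{2\lambda}\trace(\Deltam^2)$ are strictly increasing on $(0,\infty)$ while the remaining two terms are constant in $\lambda$; hence $h$ is strictly increasing, with $h(\lambda)\to-\infty$ as $\lambda\to0^+$ (under the non-degeneracy $\Deltam\neq\zerov$) and $h(\lambda)\to+\infty$ as $\lambda\to\infty$. Therefore $h$ has a unique positive zero $\lambda^{\star}(t)$, and $h(\lambda)\geq 0$ precisely for $\lambda\geq\lambda^{\star}(t)$. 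Intersecting with the admissibility constraint $\lambda\geq 1$ inherited from Theorem~\ref{Stealth_OPT} gives the stated range $\lambda\geq\max(\lambda^{\star}(t),1)$, on which the concentration bound delivers ${\sf P_D}(\lambda)\leq e^{-t}$.

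The part I expect to require the most care is matching the Laurent--Massart level to the statement: one must apply the inequality with the correct identification of the weight-vector norms ($\|\cdot\|_2^2=\trace(\Deltam^2)$ and $\|\cdot\|_\infty=\|\Deltam\|_\infty$) and, crucially, keep the logarithm lower bound at second order so that the resulting sufficient condition coincides with (\ref{Equ:CI_IE}) rather than a looser surrogate that would yield a larger, suboptimal $\lambda^{\star}(t)$.
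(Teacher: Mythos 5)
Your proposal is correct and follows essentially the same route as the paper's proof: starting from Lemma \ref{Pro_PD}, center the weighted $\chi^2$ sum, lower bound the log-determinant term by its second-order expansion so that the sufficient condition becomes exactly the left-hand side of (\ref{Equ:CI_IE}) being nonnegative, apply the Laurent--Massart bound with the identifications $\|\cdot\|_2^2=\trace(\Deltam^2)$ and $\|\Deltam\|_{\infty}$, and use monotonicity in $\lambda$ to define the unique positive root $\lambda^{\star}(t)$. The only minor (and in fact slightly cleaner) difference is that you justify $\log(1+u)\geq u-u^2/2$ for all $u\geq 0$ by a derivative argument, whereas the paper derives the same bound by pairing consecutive terms of the Taylor series of $\log(1+\lambda^{-1}(\Deltam)_{i,i})$, which additionally leans on $(\Deltam)_{i,i}\leq 1$ and $\lambda\geq 1$ to guarantee convergence and nonnegativity of each pair.
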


\begin{proof}
We start with the result of Lemma \ref{Pro_PD} which gives
\begin{IEEEeqnarray}{l}
\label{Equ:PD_CI1}
{\sf P_{D}}(\lambda) \hspace{-0.2em} = \hspace{-0.2em}\PP \hspace{-0.2em} \left[{(U^p)}^{\textnormal{\Tt}} \hspace{-0.2em} \Deltam U^p \hspace{-0.2em} \geq \hspace{-0.2em} \lambda\left(2 \log\tau \hspace{-0.1em} + \hspace{-0.1em} \log \left|\Id_{p} \hspace{-0.1em}  + \hspace{-0.1em} \lambda^{-1} \hspace{-0.2em} \Deltam\right|\right)\right]. \IEEEeqnarraynumspace
\end{IEEEeqnarray}
We now proceed to expand the term $\log \left|\Id_{p} + \lambda^{-1}\Deltam\right|$ using a Taylor series expansion resulting in
\begin{IEEEeqnarray}{ll}
&\log \left|\Id_{p} + \lambda^{-1}\Deltam\right|  \IEEEnonumber\\
& \ = \sum_{i=1}^{p} \log\left(1+\lambda^{-1}(\Deltam)_{i,i}\right) \\
& \ = \sum_{i=1}^{p}\left(\sum_{j=1}^{\infty}  \left( \frac{\left(\lambda^{-1} (\Deltam)_{i,i}\right)^{2j-1}}{2j-1} - \frac{\left(\lambda^{-1}(\Deltam)_{i,i}\right)^{2j}}{2j}\right) \right).\label{Equ:Taylor_1} \IEEEeqnarraynumspace
\end{IEEEeqnarray}
Since $(\Deltam)_{i,i} \leq 1 \; \textnormal{for} \; i=1, \ldots, p$, and $\lambda \geq 1$,
then
\begin{IEEEeqnarray}{c}
\frac{\left(\lambda^{-1} (\Deltam)_{i,i}\right)^{2j-1}}{2j-1} - \frac{\left(\lambda^{-1}(\Deltam)_{i,i}\right)^{2j}}{2j} \geq 0, \ \textnormal{for}\; j \in \ZZ^+. \IEEEeqnarraynumspace
\end{IEEEeqnarray}
Thus, (\ref{Equ:Taylor_1}) is lower bounded by the second order Taylor expansion, i.e.,
\begin{IEEEeqnarray}{lcl}
\log  \left|\Id_{p} + \Deltam\right| & \ \geq \ & \sum_{i=1}^{p} \left(\lambda^{-1}(\Deltam)_{i,i} -\frac{\left(\lambda^{-1} (\Deltam)_{i,i}\right)^2}{2}\right)\\
& = &
\label{eq:Taylor_2}
\frac{1}{\lambda}\trace(\Deltam) - \frac{1}{2\lambda^2}\trace(\Deltam^2). \IEEEeqnarraynumspace
\end{IEEEeqnarray}
Substituting (\ref{eq:Taylor_2}) in (\ref{Equ:PD_CI1}) yields
\begin{IEEEeqnarray}{l}
 {\sf P_{D}}(\lambda) \hspace{-0.1em} \leq \hspace{-0.1em} \PP \hspace{-0.2em} \left[{(U^p)}^{\textnormal{\Tt}} \hspace{-0.2em} \Deltam U^p \geq  \trace(\Deltam) \hspace{-0.1em} + \hspace{-0.1em} 2  \lambda\log\tau \hspace{-0.1em} - \hspace{-0.1em} \frac{1}{2\lambda}\trace(\Deltam^2)\right] \hspace{-0.2em} .  \label{Equ:PD_CI2} \IEEEeqnarraynumspace
\end{IEEEeqnarray}
Note that $\EE\left[(U^p)^{\textnormal{\Tt}} \Deltam U^p \right]=\trace(\Deltam)$, and therefore, evaluating the probability in (\ref{Equ:PD_CI2}) is equivalent to evaluating the probability of $(U^p)^{\textnormal{\Tt}} \Deltam U^p$ deviating $2 \lambda  \log\tau - \frac{1}{2\lambda}\trace(\Deltam^2)$ from the mean.
In view of this and the results in \cite{laurent_adaptive_2000} and \cite{hsu_tail_2012}, the right-hand side in (\ref{Equ:PD_CI2}) is upper bounded by
\begin{IEEEeqnarray}{ll}
{\sf P_{D}}(\lambda) & \leq \hspace{-0.2em}  \PP \hspace{-0.2em}  \left[{(U^p)}^{\textnormal{\Tt}} \hspace{-0.2em}  \Deltam U^p \geq   \trace(\Deltam) \hspace{-0.2em} + \hspace{-0.2em} 2\sqrt{\trace({\Deltam}^{2})t}  \hspace{-0.2em} + \hspace{-0.2em}  2||\Deltam||_{\infty} t \right] \IEEEeqnarraynumspace \\
& \leq e^{-t},\label{Equ:PD_CI5}
\end{IEEEeqnarray}
for $t >0$ satisfying
\begin{IEEEeqnarray}{c}\label{Equ:CI_SOI}
 2 \lambda\log\tau  - \frac{1}{2\lambda}\trace(\Deltam^2) \geq 2\sqrt{\trace({\Deltam}^{2})t} + 2||\Deltam||_{\infty} t \label{Equ:CI_IE}.
\end{IEEEeqnarray}
The expression in (\ref{Equ:CI_IE}) is satisfied with equality for two values of $\lambda$, one is strictly negative and the other one is strictly positive denoted by $\lambda^{\star}(t)$, when $\tau> 1$. The result follows by noticing that the left-hand term of (\ref{Equ:CI_IE}) increases monotonically for $\lambda>0$ and choosing $\lambda \geq \textnormal{max}\left(\lambda^{\star}(t),1\right)$. This concludes the proof.
\end{proof}
%
It is interesting to note that for large values of $\lambda$ the probability of detection decreases exponentially fast with $\lambda$. We will later show in the numerical results that the regime in which the exponentially fast decrease kicks in does not align with the saturation of the mutual information loss induced by the attack.

\section{Numerical Simulation} \label{Numerical_Simulation}
In this section, we present simulations to evaluate the performance of the proposed attack strategy in practical state estimation settings. In particular,
the IEEE 14-Bus, 30-Bus and 118-Bus test systems are considered in the simulation.
In state estimation with linearized dynamics, the Jacobian measurement matrix is determined by the operation point.
We assume a DC state estimation scenario \cite{abur_power_2004, grainger_power_1994}, and thus, we set the resistances of the branches to $0$ and the bus voltage magnitude to $1.0$ per unit. Note that in this setting it is sufficient to specify the network topology, the branch reactances, real power flow, and the power injection values to fully characterize the system. Specifically, we use the IEEE test system framework provided by MATPOWER \cite{zimmerman_matpower:_2011}.
We choose the bus voltage angle to be the state variables, and use the power injection and the power flows in both directions as the measurements.

As stated in Section \ref{Sec:DEPD}, there is no closed-form expression for the distribution of a positively weighted sum of independent $\chi^2$ random variables, which is required to calculate the probability of detection of the generalized stealth attacks as shown in Lemma \ref{Pro_PD}. For that reason, we use the LPB method and the {MOMENTCHI2} package \cite{bodenham_momentchi2:_2016} to numerically evaluate the probability of attack detection.

The simulation settings are the same as in \cite{Sun_information-theoretic_2017}.
The covariance matrix of the state variables is assumed to be a Toeplitz matrix with exponential decay parameter $\rho$, where the exponential decay parameter $\rho$ determines the correlation strength between different entries of the state variable vector.
The performance of the generalized stealth attack is a function of weight given to the detection term in the attack construction cost function, i.e. $\lambda$, the correlation strength between state variables, i.e. $\rho$, and the Signal-to-Noise Ratio (SNR) of the power system which is defined as
\begin{equation}
\textnormal{SNR} \eqdef 10\log_{10}\left(\frac{\trace{(\Hm\Sigmam_{X\!X}\Hm^\textnormal{T}})}{m\sigma^2}\right).
\end{equation}

\subsection{Generalized Stealth Attack Performance} \label{Subsec:Numerical_GSA}

Fig. \ref{Fig:MIPD_rho_S10} and Fig. \ref{Fig:MIPD_rho_S20} depict the performance of the optimal attack construction given in (\ref{eq:att_cons}) for different values of $\rho$ with $\textnormal{SNR} = 10 \ \textnormal{dB}$ and $\textnormal{SNR} = 20 \ \textnormal{dB}$, respectively, when $\lambda=2$
 and $\tau=2$. Interestingly, the performance of the attack construction does not change monotonically with the correlation strength, which suggests that the correlation among the state variables does not necessarily provide an advantage to the attacker. Admittedly, for a small and moderate values of $\rho$, the performance of the attack does not change significantly with $\rho$ for both objectives. This effect is more noticeable in the high SNR scenario. However, for large values of $\rho$ the performance of the attack improves significantly in terms of both mutual information and probability of detection. Moreover, the advantage provided by large values of $\rho$ is more significant for the 118-Bus system than for the 30-Bus system, which indicates that correlation between the state variables is easier to exploit for the attacker in large systems.
\begin{figure}[t!]
\centering
\includegraphics[scale=0.5]{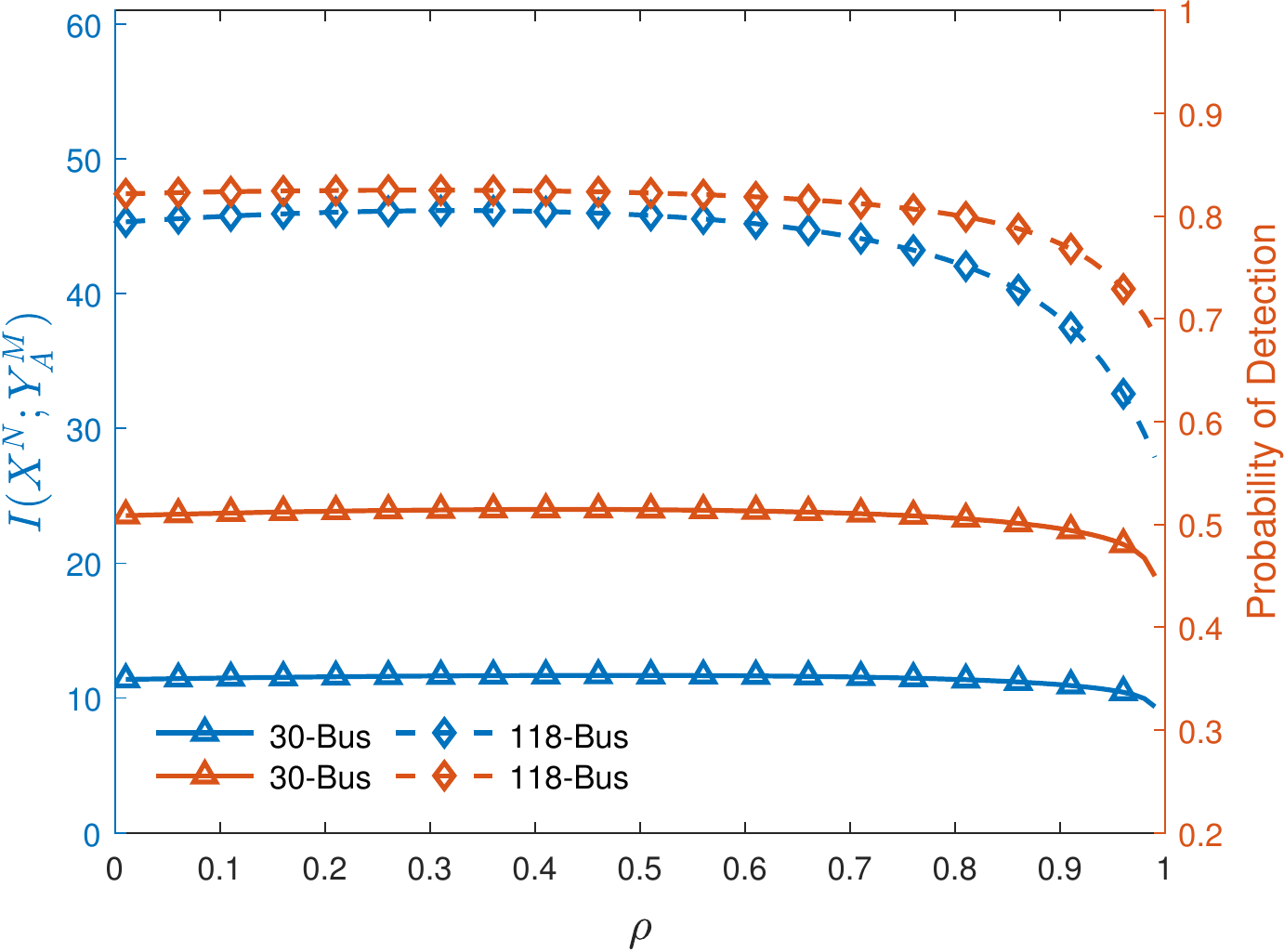}
\caption{ Performance of the generalized stealth attack in terms of mutual information and probability of detection for different values of $\rho$ when $\lambda = 2$, $\tau = 2$, and $\textnormal{SNR} = 10 \ \textnormal{dB}$.}
\label{Fig:MIPD_rho_S10}
\end{figure}

\begin{figure}[t!]
\centering
\includegraphics[scale=0.5]{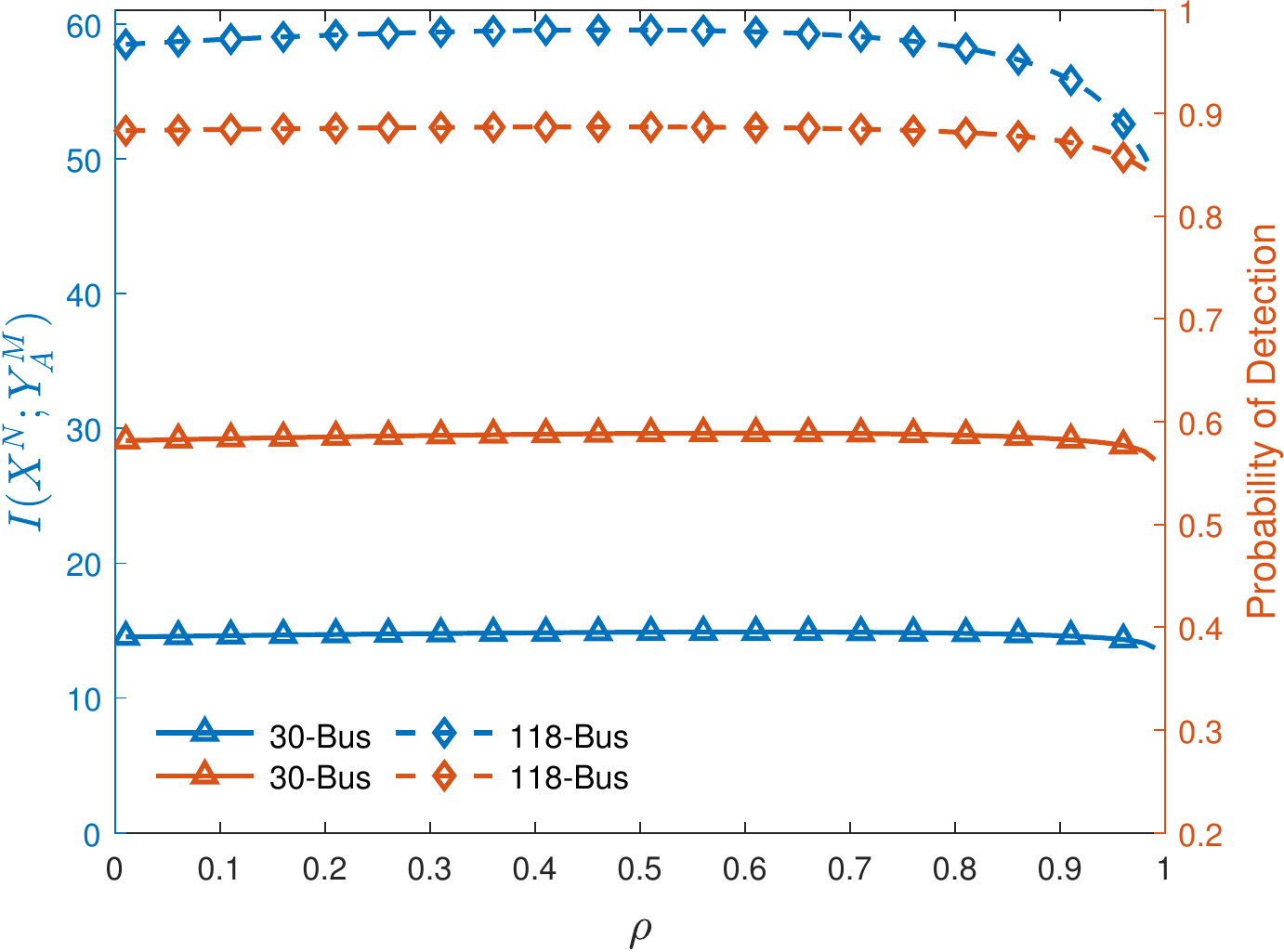}
\caption{ Performance of the generalized stealth attack in terms of mutual information and probability of detection for different values of $\rho$ when $\lambda = 2$, $\tau = 2$, and $\textnormal{SNR} = 20 \ \textnormal{dB}$.}
\label{Fig:MIPD_rho_S20}
\end{figure}

Fig. \ref{Fig:MIPD_lambda_log_S10} and Fig. \ref{Fig:MIPD_lambda_log_S20} depict the performance of the optimal attack construction for different values of $\lambda$ and $\rho$ with $\textnormal{SNR} = 10 \ \textnormal{dB}$ and $\textnormal{SNR} = 20 \ \textnormal{dB}$, respectively, when $\tau=2$.
As expected, larger values of the parameter $\lambda$ yield smaller values of the probability of attack detection while increasing the mutual information between the state variables vector and the compromised measurement vector.
We observe that the probability of detection decreases approximately linearly with respect to $\log \lambda$ for moderate values of $\lambda$.
On the other hand, Theorem \ref{pro_CI} states that for large values of $\lambda$ the probability of detection decreases exponentially fast to zero.
However, for the range of values of $\lambda$ in which the decrease of probability of detection is approximately linear with respect to $\log \lambda$, there is no significant reduction on the rate of growth of mutual information.
In view of this, the attacker needs to choose the value of $\lambda$ carefully as the convergence of the mutual information to the asymptote $I(X^N;Y^M)$ is slower than that of the probability of detection to zero.

\begin{figure}[t!]
\centering
\includegraphics[scale=0.5]{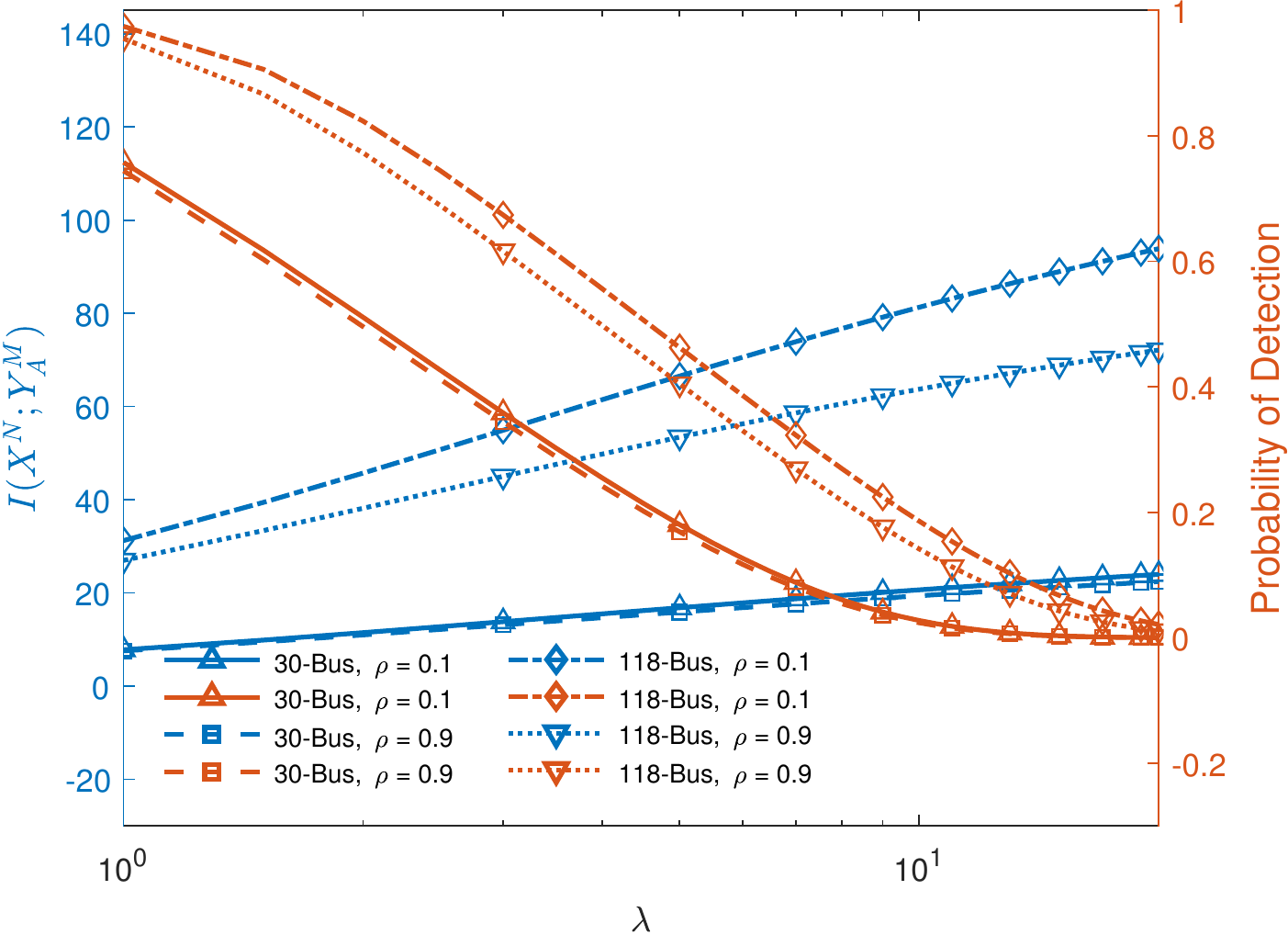}
\caption{ Performance of the generalized stealth attack in terms of mutual information  and probability of detection for different values of $\lambda$ and system size when $\rho = 0.1$, $\rho = 0.9$, $\textnormal{SNR} = 10 \ \textnormal{dB}$ and $\tau = 2$.}
\label{Fig:MIPD_lambda_log_S10}
\end{figure}

\begin{figure}[t!]
\centering
\includegraphics[scale=0.5]{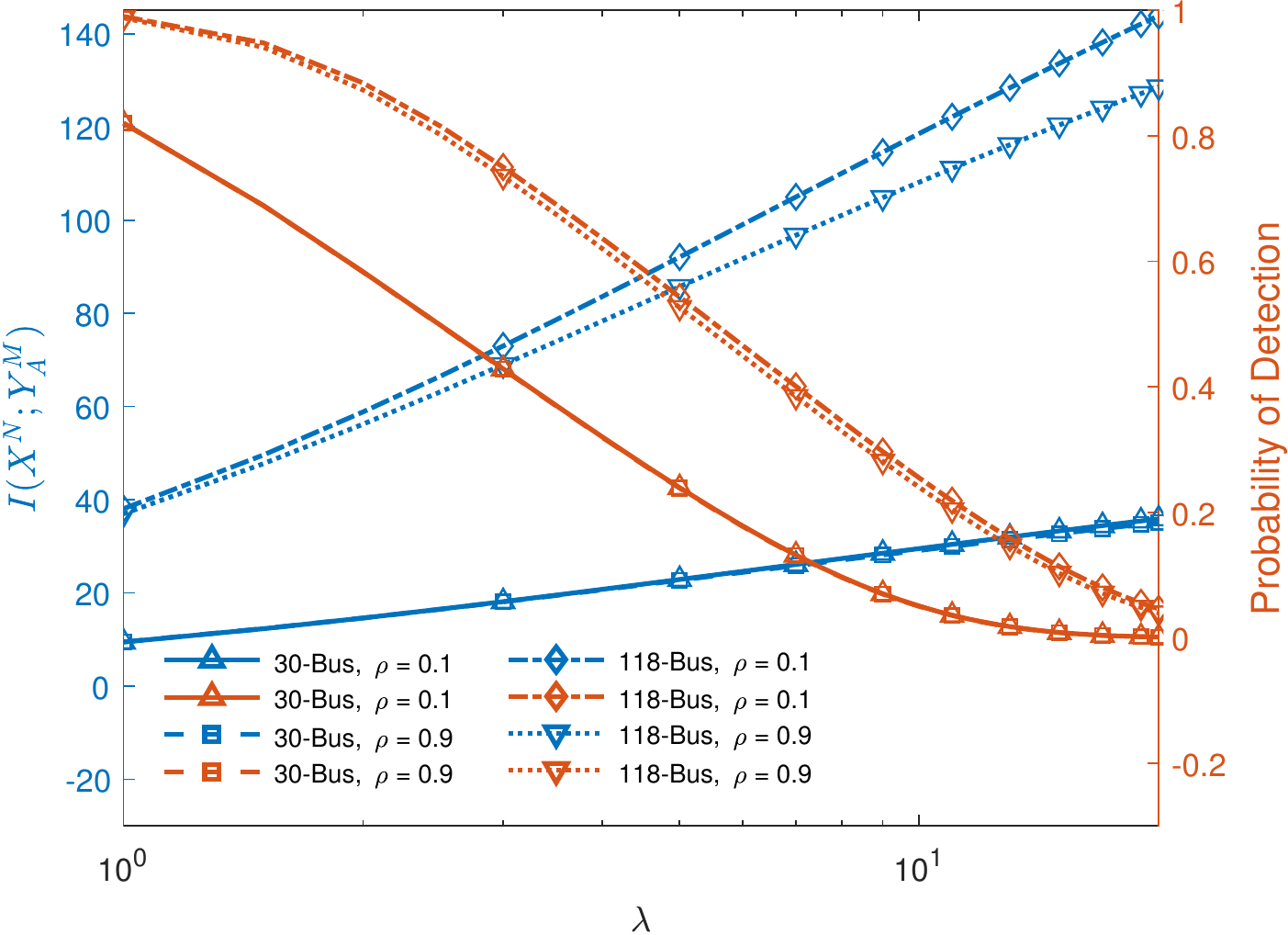}
\caption{ Performance of the generalized stealth attack in terms of mutual information  and probability of detection for different values of $\lambda$ and system size when $\rho = 0.1$, $\rho = 0.9$, $\textnormal{SNR} = 20 \ \textnormal{dB}$ and $\tau = 2$.}
\label{Fig:MIPD_lambda_log_S20}
\end{figure}

The comparison between the 30-Bus and 118-Bus systems shows that for the smaller size system the probability of detection decreases faster to zero while the rate of growth of mutual information is smaller than that on the larger system.
This suggests that the choice of $\lambda$ is particularly critical in large size systems as smaller size systems exhibit a more robust attack performance for different values of $\lambda$.
The effect of the correlation between the state variables is significantly more noticeable for the 118-bus system.
While there is a performance gain for the 30-bus system in terms of both mutual information and probability of detection {due to the high correlation between the state variables}, the improvement is more noteworthy for the 118-bus case.
Remarkably, the difference in terms of mutual information between the case in which $\rho=0.1$ and $\rho=0.9$ increases as $\lambda$ increases which indicates that the cost in terms of mutual information of reducing the probability of detection is large in the small values of correlation.

\begin{figure}[t!]
\centering
\includegraphics[scale=0.4]{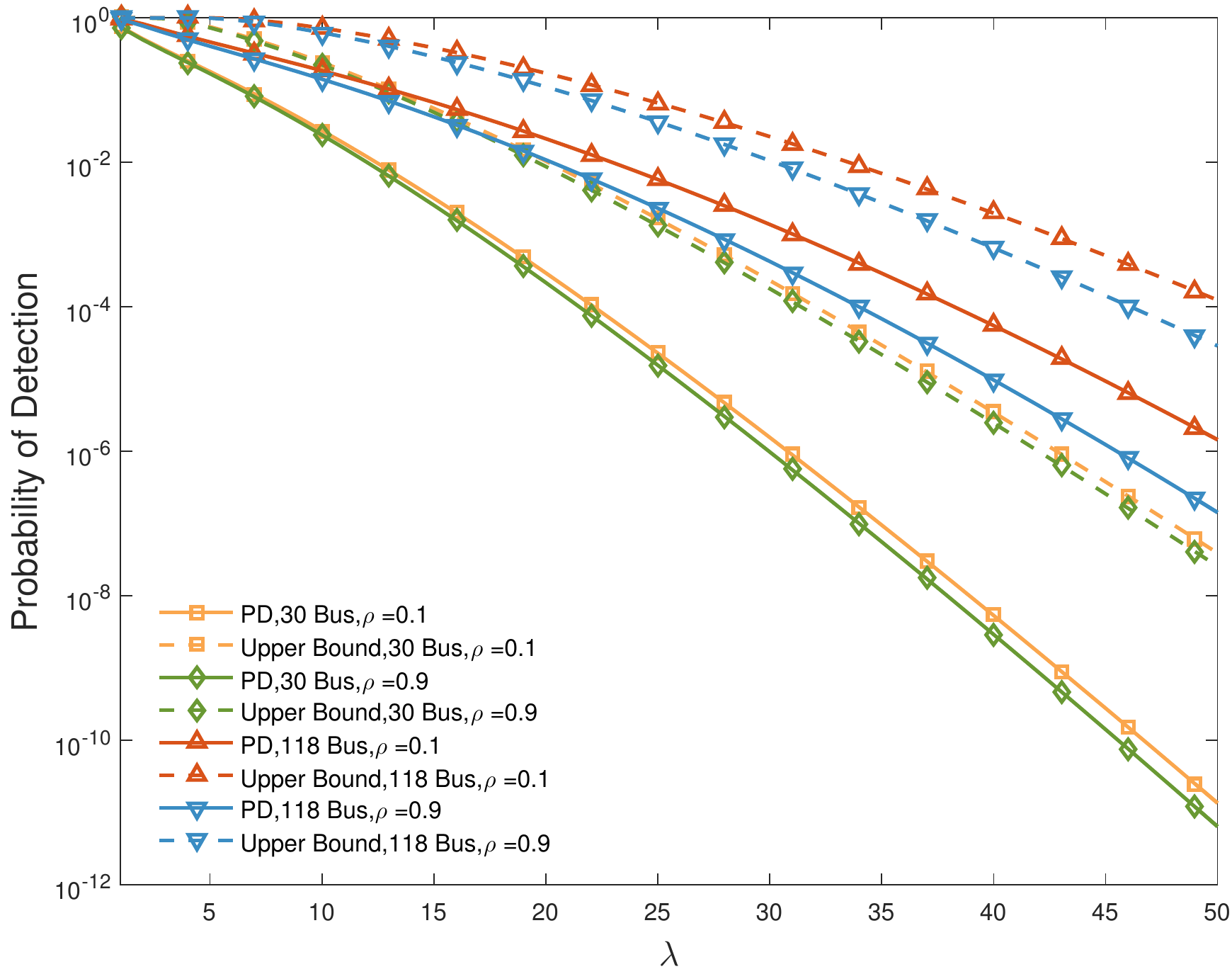}
\caption{Upper bound on probability of detection given in Theorem \ref{pro_CI} for different values of $\lambda$ when $\rho = 0.1 \ \text{or} \ 0.9$, $\textnormal{SNR} = 10 \ \textnormal{dB}$, and $\tau = 2$.}
\label{Fig:CI_Bound_S10}
\end{figure}

\begin{figure}[t!]
\centering
\includegraphics[scale=0.4]{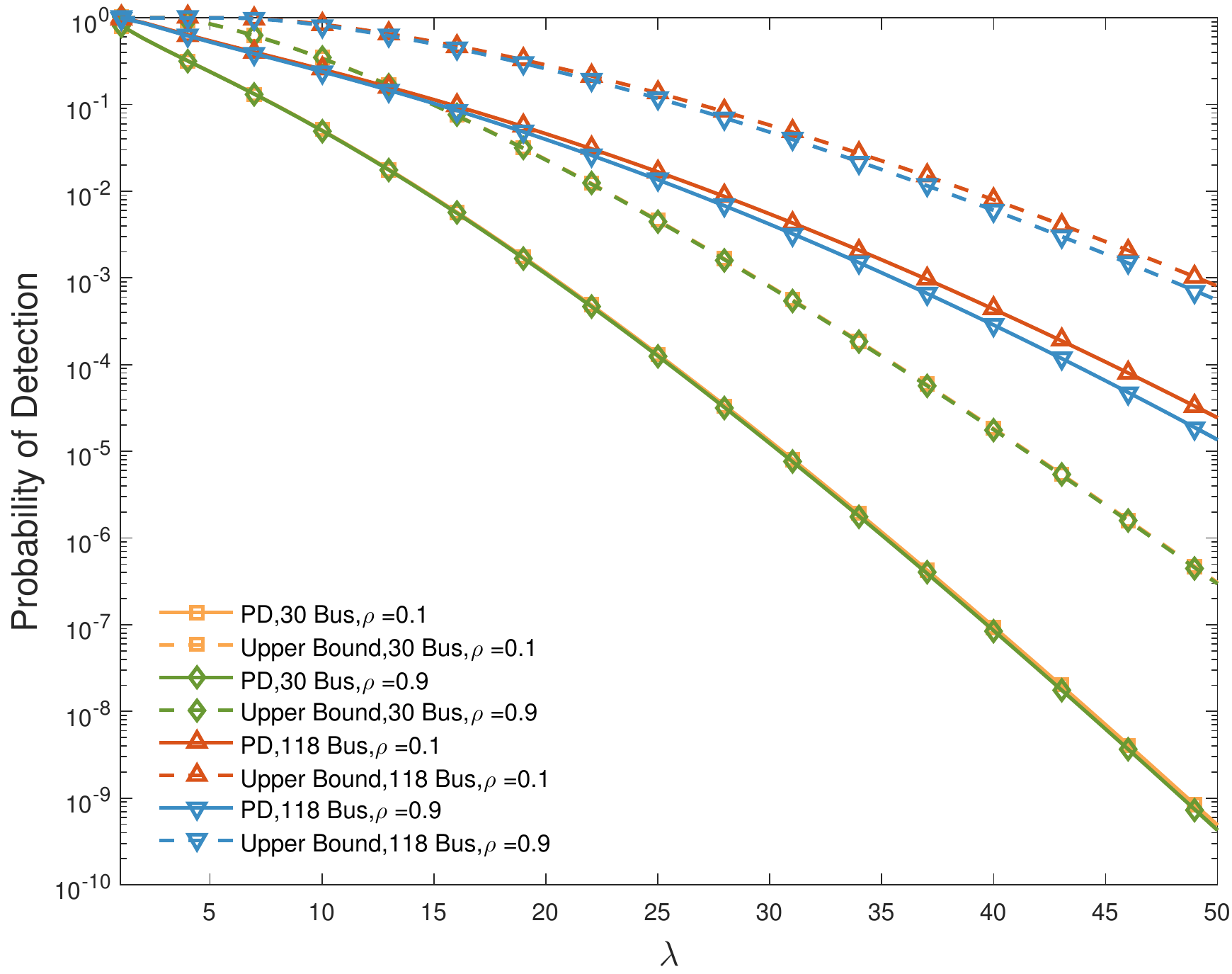}
\caption{Upper bound on probability of detection given in Theorem \ref{pro_CI} for different values of $\lambda$ when $\rho = 0.1 \ \text{or} \ 0.9$, $\textnormal{SNR} = 20 \ \textnormal{dB}$, and $\tau = 2$.}
\label{Fig:CI_Bound_S20}
\end{figure}

The performance of the upper bound given by Theorem \ref{pro_CI} on the probability of detection for different values of $\lambda$ and $\rho$ when $\tau=2$ and $\textnormal{SNR} = 10 \ \textnormal{dB}$ is shown in Fig. \ref{Fig:CI_Bound_S10}. Similarly, Fig. \ref{Fig:CI_Bound_S20} depicts the upper bound with the same parameters but with $\textnormal{SNR} = 20 \ \textnormal{dB}$. As shown by Theorem \ref{pro_CI} the bound decreases exponentially fast for large values of $\lambda$. Still, there is a significant gap to the probability of attack detection evaluated numerically. This is partially due to the fact that our bound is based on the concentration inequality in \cite{laurent_adaptive_2000} which introduces a gap of more than an order of magnitude.  Interestingly, the gap decreases when the value of $\rho$ increases although the change is not  significant. More importantly, the bound is tighter for lower values of SNR for both 30-bus and 118-bus systems.

\subsection{Performance and Sensitivity under AC State Estimation} \label{Subsec:Numerical_Sensitive}

\begin{figure}[t!]
\centering
\includegraphics[scale=0.4]{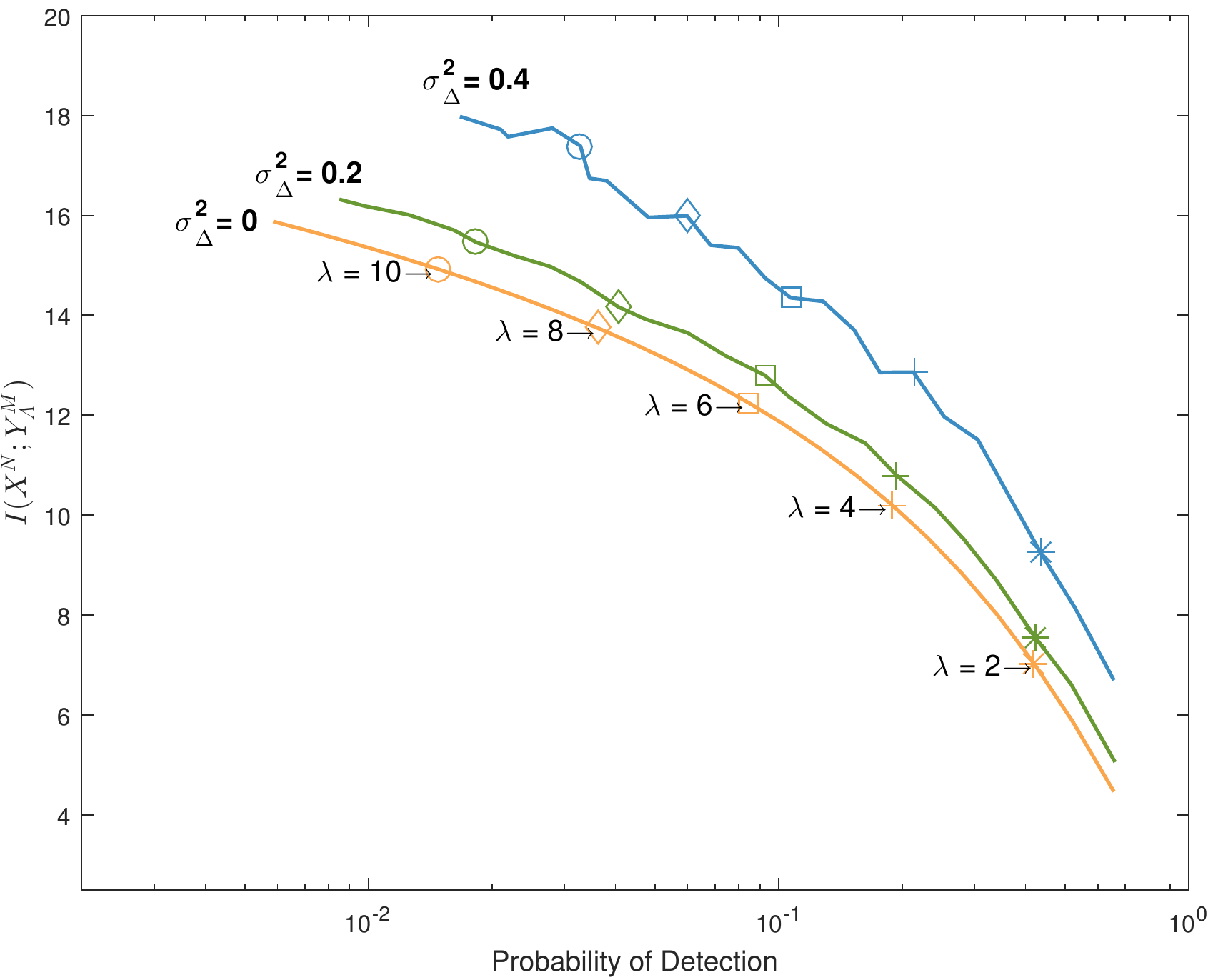}
\caption{Performance of generalized stealth attack in terms of mutual information and probability of detection for different values of $\sigma_{\Delta}^2$ and $\lambda$ on IEEE $14$-Bus system when $\rho = 0.1$, $\tau = 2$, and $\textnormal{SNR} = 20 \ \textnormal{dB}$. The marker represents the same value of $\lambda$ is used in the attack construction.}
\label{Fig:Sen_14}
\end{figure}

\begin{figure}[t!]
\centering
\includegraphics[scale=0.4]{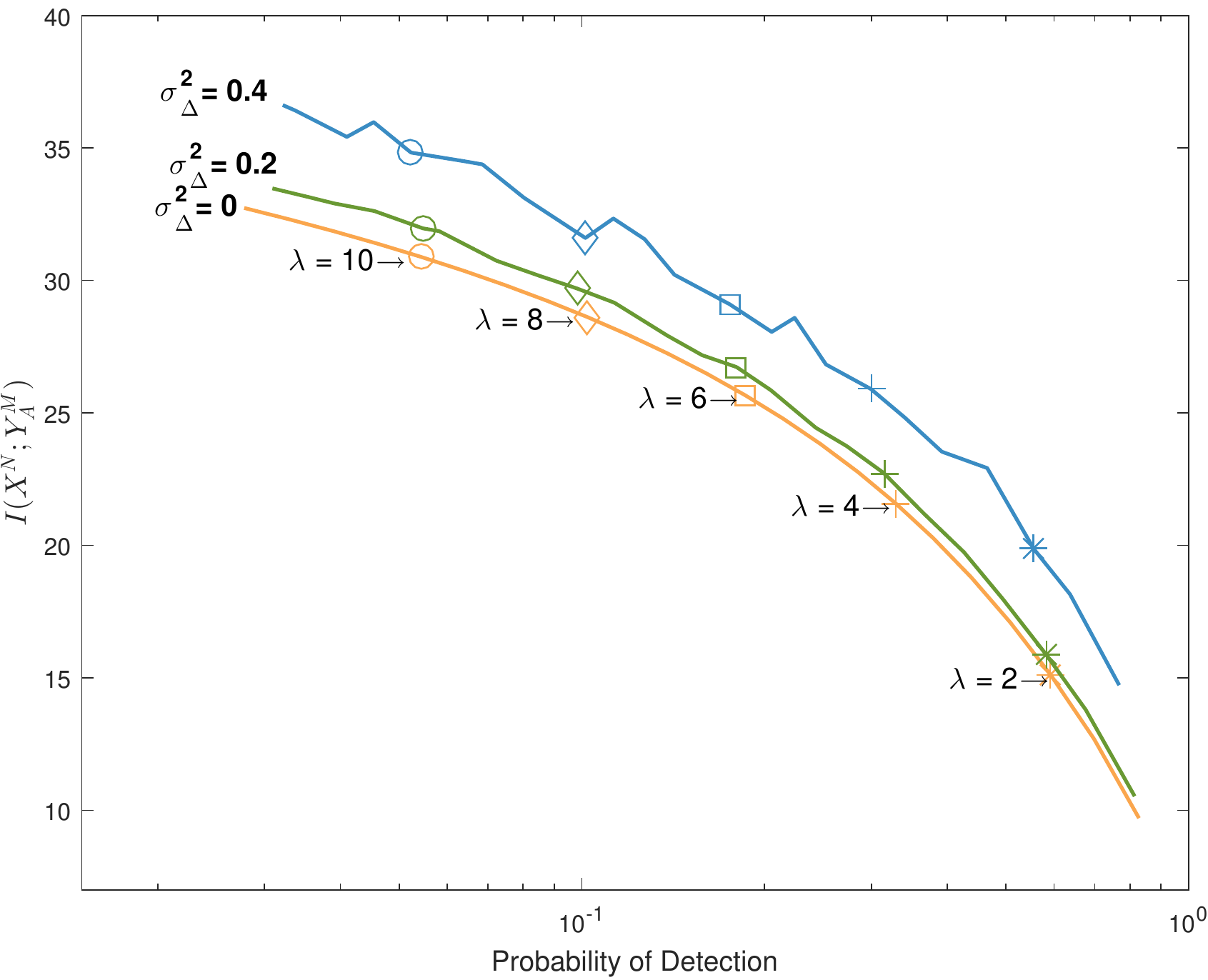}
\caption{Performance of generalized stealth attack in terms of mutual information and probability of detection for different values of $\sigma_{\Delta}^2$ and $\lambda$ on IEEE $30$-Bus system when $\rho = 0.1$, $\tau = 2$, and $\textnormal{SNR} = 20 \ \textnormal{dB}$. The marker represents the same value of $\lambda$ is used in the attack construction.}
\label{Fig:Sen_30}
\end{figure}

In the AC state estimation case the iterative estimation methods require a nominal operation point that is updated for each iteration. When the attacker has the perfect information about the operation point in each iteration, i.e. perfect information about Jacobian matrix $\Hm$ in each iteration, the resulting mutual information and probability of detection follow from Corollary \ref{Cor_MI} and Lemma \ref{Pro_PD} directly.
\emph{In the following, we study the impact of imperfect nominal operation point information on the attack performance.}
In particular the generalized stealth attacks are constructed as $A_{0}^{m} \sim \Nc (\zerov,\frac{1}{\lambda}\Hm_{0}\Sigmam_{X\!X}\Hm_{0}^{\Tt})$, where $\Hm_{0}$ is the Jacobian matrix at the nominal operation point $\xv_{0}$ and is given by
\begin{IEEEeqnarray}{c}
\Hm_{0} = \frac{\partial}{ \partial (X^{n}) } H(X^{n})|_{X^{n} = \xv_{0}},
\end{IEEEeqnarray}
with $ H(X^{n}) \in \RR^{m}$ denoting the vector of random variables induced by the nonlinear relation between the state variables and the measurements.
To model the imperfect knowledge of the nominal point, the nominal linearization point is perturbed with random variable $\Delta \sim \Nc(\zerov,\sigma_\Delta^2\Id)$ resulting in the Jacobian matrix $\Hm$ given by
\begin{IEEEeqnarray}{c}
\Hm = \frac{\partial}{ \partial (X^{n}) } H(X^{n})|_{X^{n} = \xv_{0} + \Delta}.
\end{IEEEeqnarray}
Note that the introduction of this random perturbation gives us a way to control the strength of the perturbation, i.e. the uncertainty over the nominal linearization point, and as a result we study the sensitivity of the attacks under AC state estimation by changing the variance $\sigma^2_\Delta$ in the simulations.

Fig. \ref{Fig:Sen_14} depicts the performance of the generalized stealth attacks in terms of the mutual information and the probability of detection for different values of $\sigma_{\Delta}^2$ and $\lambda$ on the IEEE 14-Bus system when $\rho = 0.1$, $\tau = 2$, and $\textnormal{SNR} = 20 \ \textnormal{dB}$.
Similarly Fig. \ref{Fig:Sen_30} shows the performance of the attacks under the same setting on IEEE 30-Bus system.
We generate $200$ realizations of $\Delta$ per point and for each realization of $\Delta$ we evaluate $2000$ realizations of the state variables.
The curve corresponding to the case when $\sigma_{\Delta}^2=0$ describes the performance of the attacks with perfect knowledge of the nominal operation point.
As expected,
when there is less accurate knowledge about the nominal operation point, i.e. $\sigma_\Delta^2$ increases, the performance of the attack $A_{0}^{m}$ decreases.
Interestingly the performance decrease translates in a larger value of mutual information for all cases.
However, the change in probability of detection is not as significant, to the extent that in some cases the probability of detection decreases.
Note that for all cases, overall the attack performance decreases when perfect operation point is not available.
Interestingly, the stealth of the attacks is more robust for the IEEE 30-Bus system than for the IEEE 14-Bus system, which suggests that the attacker is better positioned to cope with system uncertainty for larger networks.

\section{Conclusions} \label{Conclusion}

We have proposed a novel data injection attacks based on information-theoretic performance measures. Specifically, we have posed the attack construction problem as an optimization problem in which the cost function combines the mutual information and the probability of attack detection. The proposed cost function allows to obtain an arbitrarily small probability of attack detection via a parameter that weights the effect of the mutual information and the probability of detection. The resulting random attack construction has been analyzed in terms of the information loss and the probability of attack detection that it induces on the system. We have characterized the probability of attack detection by obtaining an easy to compute upper bound. The upper bound has been used to provide a practical attack construction guideline by determining the cost function that achieves a given probability of attack detection.

\bibliographystyle{IEEEbib}
\bibliography{reference}

%




\end{document}